\definecolor{darkblue}{rgb}{0.1,0.1,0.9}
\definecolor{darkred}{rgb}{0.9,0.1,0.1}
\newtheorem{theorem}{Theorem}[section]
\newtheorem{definition}[theorem]{Definition}
\newtheorem{proposition}[theorem]{Proposition}
\newtheorem{assumption}[theorem]{Assumption}
\newtheorem{lemma}[theorem]{Lemma}
\newcommand{\cX}{\mathcal{X}}
\newcommand{\cL}{\mathcal{L}}
\newcommand{\cI}{\mathcal{I}}
\newcommand{\cN}{\mathcal{N}}
\newcommand{\cM}{\mathcal{M}}
\newcommand{\cC}{\mathcal{C}}
\renewcommand{\tilde}{\widetilde}
\newcommand{\E}{\mathbb{E}}
\renewcommand{\Pr}{\mathbb{P}}
\newcommand{\One}{\mathbbm{1}}
\newcommand{\R}{\mathbb{R}}
\renewcommand{\Pr}{\mathbb{P}}
\newcommand{\allocation}{(\{I_{ij}\}_{i\in\cN,j\in\cM},\{\pi_{ij}\}_{i\in\cN,j\in\cM})}
\newcommand{\allocationstar}{(\{I_{ij}^*\}_{i\in\cN,j\in\cM},\{\pi_{ij}^*\}_{i\in\cN,j\in\cM})}
\newcommand{\allocationtilde}{(\{\tilde{I}_{ij}\}_{i\in\cN,j\in\cM},\{\tilde{\pi}_{ij}\}_{i\in\cN,j\in\cM})}
\newcommand{\stratstar}{\big(\bm{\nu}_1^*,\ldots,\bm{\nu}_m^*,\mathfrak{I}_1^*,\ldots,\mathfrak{I}_n^*\big)}
\newcommand{\Rmnum}[1]{\expandafter\@slowromancap\romannumeral #1@}
\title[Subgame Perfect Nash Equilibria in Large Reinsurance Markets]{\bf Subgame Perfect Nash Equilibria\vspace{0.25cm}\\In Large Reinsurance Markets\vspace{0.3cm}}
\author[Maria Andraos, Mario Ghossoub, and Michael B.\ Zhu]{Maria Andraos\vspace{0.1cm}\\ University of Waterloo\vspace{0.8cm}\\
 {Mario Ghossoub\vspace{0.1cm}\\ University of Waterloo\vspace{0.8cm}\\
  Michael B.\ Zhu\vspace{0.1cm}\\University of Waterloo\vspace{0.8cm}\\\today}}
\address{{\bf Maria Andraos}: University of Waterloo -- Department of Statistics and Actuarial Science -- 200 University Ave.\ W.\ -- Waterloo, ON, N2L 3G1 -- Canada\vspace{-0.25cm}}\email{\href{mailto:mandraos@uwaterloo.ca}{mandraos@uwaterloo.ca}\vspace{0cm}}
\address{{\bf Mario Ghossoub}: University of Waterloo -- Department of Statistics and Actuarial Science -- 200 University Ave.\ W.\ -- Waterloo, ON, N2L 3G1 -- Canada\vspace{-0.25cm}}
\email{\href{mailto:mario.ghossoub@uwaterloo.ca}{mario.ghossoub@uwaterloo.ca}\vspace{0cm}}
\address{{\bf Michael B.\ Zhu}:  University of Waterloo -- Department of Statistics and Actuarial Science -- 200 University Ave.\ W.\ -- Waterloo, ON, N2L 3G1 -- Canada\vspace{-0.25cm}} \email{\href{mailto:mbzhu@uwaterloo.ca}{mbzhu@uwaterloo.ca}\vspace{0cm}}
\thanks{\textit{Key Words and Phrases:} 
Optimal (re)insurance,
Bowley optima, 
Stackelberg equilibria, 
Subgame perfect Nash equilibria, 
Pareto efficiency, 
Choquet pricing, 
Choquet risk measure, 
Heterogeneous beliefs.\vspace{0.2cm}}
\thanks{\textit{JEL Classification:} C02, C62, C79, D86, G22. \vspace{0.2cm}}
\thanks{Mario Ghossoub acknowledges financial support from the Natural Sciences and Engineering Research Council of Canada (NSERC Grant No.\ 2024-03744). Michael B.\ Zhu acknowledges financial support from the Society of Actuaries through the Hickman Scholars Program.}
\thanks{}
\begin{document}
\sloppy

\maketitle

\begin{abstract}
We consider a model of a reinsurance market consisting of multiple insurers on the demand side and multiple reinsurers on the supply side, thereby providing a unifying framework and extension of the recent literature on optimality and equilibria in reinsurance markets. Each insurer has preferences represented by a general Choquet risk measure and can purchase coverage from any or all reinsurers. Each reinsurer has preferences represented by a general Choquet risk measure and can provide coverage to any or all insurers. Pricing in this market is done via a nonlinear pricing rule given by a Choquet integral. We model the market as a sequential game in which the reinsurers have the first-move advantage. We characterize the Subgame Perfect Nash Equilibria in this market in some cases of interest, and we examine their Pareto efficiency. In addition, we consider two special cases of our model that correspond to existing models in the related literature, and we show how our findings extend these previous results. Finally, we illustrate our results in a numerical example.
\end{abstract}

\bigskip
\section{Introduction}
\label{Intro}

Equilibrium models of risk-sharing markets in general, and (re)insurance markets in particular, were first introduced by \cite{Borch1962a} and \cite{Wilson1968}, using the framework of general equilibrium theory under uncertainty laid out by \cite{allais53b}, \cite{Arrow1953}, and \cite{debreu59}. This has set the paradigm for the field, based on the view that risks can be exchanged on markets for state-contingent goods, thereby relating price-mediated risk sharing to welfare economics, and notably to individually-rational Pareto-efficient allocations of risk. Along these lines, optimality in risk-exchange economies was examined through the prism of Pareto-efficiency in pure exchange economies with a state-contingent aggregate endowment. In particular, this led to a large literature on individually-rational Pareto-optimal (re)insurance design. We refer, for instance, to \cite{LoubergeDionne2025} and \cite{Gollier1992,Gollier2000,Gollier2013} for an overview. 

Concerns were then raised that individual rationality does not capture the reality that subsets of agents in the market may choose to deviate and form a separate risk-sharing market. This led to a cooperative game theoretic approach to insurance markets (e.g., \cite{Lemaire1977,Lemaire1991}), based on the stronger notion of collective rationality, whereby optimality of contractual agreements was then seen through the lens of various solution concepts from cooperative game theory, such as the core (e.g., \cite{BatonLemaire1981b})  or the bargaining set (e.g., \cite{BatonLemaire1981a}). Another discomfort with the general equilibrium approach to risk-sharing markets stemmed from the lack of considerations for strategic interactions in the classical Arrow-Debreu model. This prompted another approach to optimality of risk exchanges based either on sequential-move or simultaneous-move noncooperative game theoretic equilibrium concepts. The former category arises naturally in monopolistic insurance markets, and it includes optimality criteria based on the Stackelberg equilibrium. This was first examined in an insurance setting by \cite{ChanGerber}, in a market with one insurer and one reinsurer, both with expected-utility (EU) preferences. The authors gave a characterization of Stackelberg equilibria in the special case of exponential utilities. \cite{taylor1992}  extends these findings to more general risk exchanges with EU preferences, and \cite{CHEUNG201964} examine a two-agent market beyond EU, where preferences are represented by convex distortion risk measures (DRMs). \cite{li2021bowley} characterize Stackelberg equilibria in a two-agent market where preferences and premia use a mean-variance functional. In a dynamic two-market setting, similar studies have been proposed by \cite{chen2018new} and \cite{cao2022stackelberg}, for instance. \cite{boonenghossoub} provided the first study of the relationship between Stackelberg-equilibrium contracts and Pareto-optimal contracts, under fairly general preferences. The first extensions of the sequential-move (re)insurance market beyond the case of two agents were done by \cite{zhu2023equilibria} and \cite{ghossoub2024stackelberg}. The former extend the sequential-move market to account for multiple insurance providers who compete on pricing. They show that the Subgame-Perfect Nash Equilibrium (SPNE) is the appropriate equilibrium concept that extends the Stackelberg equilibrium to that case, and they characterize some examples of SPNEs, in the case where all market participants have preferences given by DRMs with heterogeneous beliefs and pricing is done via a Choquet functional. \cite{ghossoub2024stackelberg} provide a different extension of the two-agent case. Specifically, they maintain the assumption of monopoly insurance, but they consider the case of multiple policyholders. In their market, each policyholder evaluates risk according to a Choquet risk measure, and the monopolistic insurer uses a coherent risk measure. They characterize both Stackelberg equilibria and Pareto-efficient contracts.

In this paper, we provide a unifying framework, by examining a large market consisting of many insurers and many reinsurers, thereby extending the setting of both \cite{zhu2023equilibria} and \cite{ghossoub2024stackelberg}. Specifically, we consider a reinsurance market consisting of multiple insurers on the demand side and multiple reinsurers on the supply side. Each insurer has preferences represented by a general Choquet risk measure and can purchase coverage from any or all reinsurers. Each reinsurer has preferences represented by a general Choquet risk measure and can provide coverage to any or all insurers. Pricing in this market is done via a nonlinear pricing rule given by a Choquet integral, and we allow for belief heterogeneity. We model the market as a sequential-move game in which the reinsurers have the first-move advantage. They collectively observe the optimal demand from the policyholders as a function of prices. With that information in mind, they proceed to select the pricing vectors that minimize their risk exposure.

Our first result (Lemma \ref{SPNEminprop}) provides a way to determine the Subgame Perfect Nash Equilibria of this market game through backward induction. This allows for a characterization of the insurers' optimal strategies (optimal indemnity functions) as a best response to given reinsurers' strategies (pricing vectors) in terms of the optimal marginal indemnities for a given set of reinsurers' pricing rules. Furthermore, we give an explicit characterization of SPNEs in two important cases of interest: (i) the case where the reinsurers are all risk neutral (Theorem \ref{ThSPNE}), and (ii) the case where the initial insurable risks are comonotone (Theorem \ref{ThSPNEcom}). The former case is a standard assumption in the literature, and the latter case is meant to account for a worst-case dependence between the insurable losses in the market. Furthermore, we characterize the set of all Pareto-optimal reinsurance contracts (Proposition \ref{prop:optimal_indemnities}), and we show how the previously characterized SPNEs lead to Pareto-optimal contracts in this market (Theorem \ref{thm:induce_IRPO}).

Finally, to illustrate our findings, we consider a numerical example that relates to an important shortcoming of Stackelberg equilibria, initially documented by \cite{boonenghossoub}. In their two-agent market model, Stackelberg equilibria induces Pareto-efficient contracts that make the insurance buyer indifferent with the status quo, thereby transferring all consumer surplus to the monopolistic insurance provider. As subsequently noted by \cite{zhu2023equilibria}, the introduction of additional insurance providers to this market with one insurance buyer alleviates this problem, as price competition is indeed ultimately beneficial to the consumer. We show how the same principle applies to markets with multiple insurance buyers. Specifically, \cite{ghossoub2024stackelberg} consider a monopolistic reinsurance market with multiple insurance buyers, and they show that Stackelberg equilibria suffer from the same problem. Namely, the equilibrium contracts are Pareto optimal but lead to all consumer surplus being absorbed by the monopoly. We show how, with multiple insurers in a reinsurance market, introducing competition on the insurance supply side through price competition leads to a strict welfare gain for the insurance buyers at equilibrium.   

The rest of this paper is organized as follows. Section \ref{sec:formulation} presents the reinsurance market considered in this paper and formulates the key concepts used thereafter, including that of the SPNE. In Section \ref{sec:SPNE_characterization}, we provide a characterization of SPNEs in terms of backward induction, we describe the optimal indemnities given pricing vectors in terms of marginal compensation, and we give a characterization of SPNE in two cases of interest: the case of risk-neutral reinsurers and the case of comonotone losses. Additionally, Section \ref{SecPO} focuses on Pareto-efficient contracts in our market, and examines the efficiency of contracts resulting from SPNEs in the two important cases mentioned above. Section \ref{sec:example} illustrates the results of this paper in a numerical example, and it shows the effect of competition between reinsurers on the welfare of agents compared to the case of a market with a monopolistic reinsurer. Section \ref{sec:conclusion} concludes. Most of the proofs are presented in Appendix \ref{appendix-proofs}.

\bigskip 
\section{Problem Formulation}
\label{sec:formulation}

Let $\mathcal{X}$ denote the collection of all bounded measurable real-valued functions on a given measurable space $(S, \Sigma)$. We consider a reinsurance market populated by $n$ insurers, each subject to a random initial loss represented by a nonnegative random variable $X_i\in\mathcal{X}$, for $i\in\{1,\ldots,n\}$.
These insurers seek reinsurance from $m$ reinsurers present in the market.
For notational convenience, let $\mathcal{N} := \{1,\ldots,n\}$ denote the set of insurers indexed by $i$, and let $\cM := \{1,\ldots,m\}$ denote the set of reinsurers, indexed by $j$. Since each initial loss is bounded, we may define $M<\infty$ to be a uniform upper bound for all $X_i$.

\medskip
\subsection{Reinsurance Contracts}

In this market, each insurer $i \in \cN$ may cede a portion of their initial risk to reinsurer $j\in\cM$ by paying a premium.
Specifically, let $I_{ij}$ denote the indemnity function purchased by insurer $i$ from reinsurer $j$,
    and let $\pi_{ij}$ be the premium paid for this contract.
Then the risk exposure of each insurer $i \in \mathcal{N}$ after reinsurance is
    \begin{equation*}
         X_i - \sum_{j=1}^mI_{ij}(X_i) +\sum_{j=1}^m\pi_{ij}\,.
    \end{equation*}
The risk exposure of each reinsurer $j \in \cM$ is
\begin{equation*}
    \sum_{i=1}^n \bigg((1+\theta_j) I_{ij}(X_i) - \pi_{ij}\bigg)\,,
\end{equation*}
where $\theta_j>0$ is a loading factor that represents the cost of reinsurance to reinsurer $j$.

Denote the set of 1-Lipschitz and non-decreasing functions on $[0,M$] by
    \[
        \mathcal{I}_0=\left\{f:[0,M]\rightarrow[0,M]\quad\middle|\quad f(0)=0, \,\,\, 0 \leq f(x) - f(y) \leq x-y \,\,\, \forall\, 0\le y\le x\le M\right\}\,.
    \]

\noindent We assume throughout that only those indemnity functions that satisfy the no-sabotage condition of \cite{carlierdana2003} are admissible. Moreover, we assume that for each insurer $i\in\cN$, the sum of contracts also satisfies the no-sabotage condition, as given by the following:

\medskip

\begin{assumption}
    \label{as:sum_1-lipschitz}
    \[
        \sum_{j=1}^mI_{ij}\in\cI_0\,.
    \]
\end{assumption}

\noindent This assumption guarantees that each insurer does not purchase multiple reinsurance contracts for the same risk.
We refer to Appendix \ref{apdx:sum_1-lipschitz} for a more detailed justification of Assumption \ref{as:sum_1-lipschitz}. 
Hence, the set of admissible indemnity structures in this market is given by
    \[
        \mathcal{I}:=\left\{\{I_{ij}\}_{i\in \cN, j \in \cM}\,\,\middle|\,\,I_{ij} \in \mathcal{I}_0\quad\forall\,(i,j) \in \cN\times\cM\,, \,\,\, \text{and} \,\,\,   \sum_{j=1}^m I_{ij} \in \mathcal{I}_0\quad\forall i \in \cN \right\}\,. 
    \]

\noindent For notational convenience, we will define the set of admissible indemnity structures for each insurer $i$ by
    \[
        \mathcal{I}_i:=\left\{\{I_j\}_{j \in \cM}\,\,\middle|\,\,I_j\in \mathcal{I}_0\quad\forall\,j\in\cM\,, \,\,\, \text{and} \,\,\,   \sum_{j=1}^m I_j \in \cI_0\right\}\,. 
    \]

\noindent Note that $\prod_{i=1}^n\cI_i=\cI$.
The reinsurance contracts in aggregate completely determine how risk is reallocated among the agents through reinsurance.

\medskip

\begin{definition}[Allocation]
An \emph{allocation} is a pair
        \[
            \allocation\in\cI\times\mathbb{R}^{n\times m}\,,
        \]
    where $\{I_{ij}\}_{i \in \cN, j \in \cM}$ is an admissible vector of indemnity functions and $\{\pi_{ij}\}_{i\in \cN, j \in \cM}$ is a vector of premia.
\end{definition}

\medskip
\subsection{Risk Measures}

The preferences of each agent are represented by risk measures. We recall some standard properties of risk measures below.

\medskip

\begin{definition}
    A risk measure $\rho:\mathcal{X} \rightarrow \mathbb{R}$ is said to be:
    \begin{itemize}
        \item Monotone: if $\rho(X) \leq \rho(Y)$ whenever $X \leq Y$, for all $X, Y \in \mathcal{X}$.
        \item Positive Homogeneous: if $\rho(\lambda X) = \lambda \rho(X)$, for all $X \in \mathcal{X}$ and $\lambda \in \mathbb{R}^+$.
        \item Translation  invariant: if $\rho(X+c) = \rho(X) + c$, for all $X \in \mathcal{X}$ and $c \in \mathbb{R}$.
        \item Comonotone additive: if $\rho(X + Y) = \rho(X) + \rho(Y)$, for all $X,Y \in \cX$ that are comonotone.\footnote{Two random variables $X, Y \in \mathcal{X}$ are said to be comonotone if $(X(s_1) - X(s_2))(Y(s_1) -Y(s_2)) \geq 0$, for all $s_1,s_2\in S$.}
        \item Subadditive: if $\rho(X+Y) \leq \rho(X) + \rho(Y)$, for all $X, Y \in \mathcal{X}$.
    \end{itemize}
\end{definition}

We assume throughout that all risk measures are Choquet risk measures, as given by the following definitions.

\medskip

\begin{definition}
    A set function $\nu:\Sigma\rightarrow\R_+$ is a \emph{capacity} if
    \begin{itemize}
        \item $\nu(\varnothing) = 0$ and $\nu(S)<\infty$, and,
        \item If $A,B\in \Sigma$ are such that $A \subseteq B $, then $\nu(A) \leq \nu(B)$.
    \end{itemize}
    \end{definition}
    
\noindent Note that we require all capacities to be finite. Let $\mathcal{C}$ denote the set of all capacities.

\medskip

\begin{definition}
    Given a capacity $\nu \in \mathcal{C}$ and a random variable $X \in \mathcal{X}$, the \emph{Choquet expectation} of $X$ with respect to $\nu$ is defined as
        \begin{equation*}
            \int X\,d\nu := \int_0^{\infty}\nu(X>t)\,dt + \int_{-\infty}^0(\nu(X>t) -\nu(S))\,dt.
        \end{equation*}
\end{definition}

We denote the risk measure of insurer $i \in \mathcal{N}$ by $\rho_i^{In}$, and the risk measure of reinsurer $j\in\cM$ by $\rho_j^{Re}$.
For all risks $X\in\cX$, we assume that
$$\rho_i^{In}(X):=\int X\,d\alpha_i 
\ \ \hbox{and} \ \ 
\rho_j^{Re}(X):=\int X\,d\tau_j,$$
where $\alpha_i,\tau_j\in\cC$ for all $i\in\cN$ and $j\in\cM$.

\medskip

Note that Choquet risk measures are translation invariant, positively homogeneous, and comonotone additive.
Hence, the risk level of insurer $i\in\cN$ given the allocation $(\{I_{ij}\}_{i \in \cN, j \in \cM}, \{\pi_{ij}\}_{i\in \cN, j \in \cM})$ is
    \[
       \rho^{In}_i \left(X_i- \sum_{j=1}^mI_{ij}(X_i) +\sum_{j=1}^m\pi_{ij}\right)= \rho^{In}_i(X_i) -  \rho^{In}_i\left(\sum_{j=1}^mI_{ij}(X_i)\right) +\sum_{j=1}^m\pi_{ij}\,.
    \]

\noindent Similarly, the risk level of reinsurer $j\in\cM$ is
    \[
       \rho^{Re}_j\left(\sum_{i=1}^n \bigg((1+\theta_j) I_{ij}(X_i) - \pi_{ij}\bigg)\right)= (1+\theta_j) \rho_j^{Re}\left( \sum_{i=1}^n I_{ij}(X_i)\right) -\sum_{i=1}^n \pi_{ij}\,.
    \]

\medskip
\subsection{Market Structure and Equilibria}
\label{sec:equilibria}

We model the reinsurance market with a sequential structure, in which the reinsurers as a collective have the first-mover advantage.
That is, each reinsurer may determine the premium principle they wish to charge to each insurer before the reinsurance contracts are purchased.
For each $(i,j)\in\cN\times\cM$, let $\Pi_{ij}:\mathcal{X}\to\R$ denote the premium principle charged by reinsurer $j$ to insurer $i$.
We assume that each reinsurer uses a Choquet premium principle, of the form
    \[
        \Pi_{ij}(X)=\int X\,d\nu_{ij}, \ \forall \, X \in \cX,
    \]
where $\nu_{ij}\in\cC$.
The strategy of a reinsurer is then completely determined by a choice of pricing capacities, as made precise by the following.

\medskip

\begin{definition}
    For each $j\in\cM$, the strategy of reinsurer $j$ is a vector $\bm{\nu_j}$ of capacities $\nu_{ij}$ of length $n$:
        \[
            \bm{\nu}_j:=\begin{bmatrix}
                    \nu_{1j}\\
                    \vdots\\
                    \nu_{nj}
                \end{bmatrix}\in\mathcal{C}^n\,,
        \]

\noindent where $\nu_{ij}$ is the pricing capacity associated with the premium charged to insurer $i$ by reinsurer $j$.
\end{definition}

The strategy of each insurer $i \in \cN$ is to select an admissible indemnity structure from the set $\cI_i$, given the prices charged by each reinsurer $j \in \cM$.

\medskip

\begin{definition}
    The strategy of insurer $i \in \cN$ is a map $\mathfrak{I}_i:\cC^m\to\cI_i$. Specifically, for an $m$-dimensional vector of capacities $(\nu_{i1},\ldots,\nu_{im})$, the selection $\mathfrak{I}_i(\nu_{i1},\ldots,\nu_{im})$ represents the indemnities selected by insurer $i$ after observing the prices $(\nu_{i1},\ldots,\nu_{im})$. For this reason, we will also refer to maps of the form $\mathfrak{I}_i:\cC^m\to\cI_i$ as indemnity selections.
\end{definition}

We denote by $\mathfrak{I}_{ij}(\nu_{i1},\ldots,\nu_{im})$ the $j$-th entry of $\mathfrak{I}_i(\nu_{i1},\ldots,\nu_{im})$. This is the indemnity purchased by the $i$-th insurer from the $j$-th reinsurer under the strategy $\mathfrak{I}_i$, given that the pricing capacities of the reinsurers are $(\nu_{i1},\ldots,\nu_{im})$.

\medskip

\begin{definition}
    A \emph{strategy} is a tuple $(\bm{\nu}_1,\ldots,\bm{\nu}_m,\mathfrak{I}_1,\ldots,\mathfrak{I}_n)$ where:
        \begin{itemize}
            \item Each $\bm{\nu}_j$ is an $n$-dimensional vector of capacities representing reinsurer $j$'s strategy vector, for each $j \in \cM$.
            \item For each $i\in\mathcal{N}$, each $\mathfrak{I}_i$ is an indemnity selection. That is, $\mathfrak{I}_i$ is a map from $\mathcal{C}^m$ to $\mathcal{I}_i$.
        \end{itemize}
\end{definition}

The risk level to insurer $i \in \cN$ resulting from a strategy $(\bm{\nu}_1,\ldots,\bm{\nu}_m,\mathfrak{I}_1,\ldots,\mathfrak{I}_n)$  is therefore
    \[
        \rho_i^{In}(X_i)
            -\sum_{j=1}^m\rho_i^{In}\Big(\mathfrak{I}_{ij}(\nu_{i1},\ldots,\nu_{im})(X_i)\Big)
            +\sum_{i=1}^n\Pi_{ij}\Big(\mathfrak{I}_{ij}(\nu_{i1},\ldots,\nu_{im})(X_i)\Big)\,.
    \]

\noindent Similarly, the risk level to reinsurer $j \in \cM$ is
    \[
        \rho_j^{Re}\left(\sum_{i=1}^n\Big((1+\theta_j) \mathfrak{I}_{ij}(\nu_{i1},\ldots,\nu_{im})(X_i)\Big)\right)
            -\sum_{i=1}^n\Pi_{ij}\Big(\mathfrak{I}_{ij}(\nu_{i1},\ldots,\nu_{im})(X_i)\Big)\,.
    \]

\medskip
\subsubsection{Subgame Perfect Nash Equilibria}

Our focus is to identify and characterize the Subgame Perfect Nash Equilibria (SPNE) within this economic game.

\medskip

\begin{definition}
\label{defn:ne}
    A \emph{Nash Equilibrium} (NE) is a strategy from which no player has an incentive to deviate. Precisely, a strategy $\stratstar$ is a Nash Equilibrium if the following hold:
    \begin{enumerate}[label=(\roman*)]
        \item \label{defn:ne_1}
            There does not exist a capacity vector $\bm{\tilde{\nu}}\in\mathcal{C}^n$ and a reinsurer $j\in\cM$ such that 
                \begin{align*}
                    &\rho^{Re}_j \bigg(\sum_{i=1}^n (1+\theta_j)\mathfrak{I}^*_{ij}(\tilde{\nu}_i, \nu^*_{i,-j})(X_i)\bigg)
                        -\sum_{i=1}^n \Pi^{\tilde{\nu}_i}\bigg(\mathfrak{I}^*_{ij}(\tilde{\nu}_i, \nu^*_{i,-j})(X_i)\bigg)\\
                    &<\rho^{Re}_j \bigg(\sum_{i=1}^n (1+\theta_j) \mathfrak{I}^*_{ij}(\nu_{i1}^*, \ldots, \nu^*_{im})(X_i)\bigg)
                        -\sum_{i=1}^n\Pi^{\nu^*_{ij}}\bigg(\mathfrak{I}^*_{ij}(\nu^*_{i1},\ldots, \nu^*_{im})(X_i)\bigg)\,,
                \end{align*}
            where $\tilde{\bm{\nu}}$ is an $n$-dimensional column vector of capacities, and $(\tilde{\nu}_i,\nu_{i,-j})$ is the $m$-dimensional row vector $(\nu_{i1},\ldots,\nu_{im})$ with its $j$-th element replaced by $\tilde{\nu}_i$.
            In other words, reinsurer $j$ cannot improve by changing their strategy from $\bm{\nu}_j^*$ to $\tilde{\bm{\nu}}$, all else held constant.
        \item \label{defn:ne_2}
            There does not exist an insurer $i\in\cN$ and an indemnity selection $\tilde{\mathfrak{I}}_i$ such that
                \begin{align*}
                    &\rho^{In}_i(X_i)
                        -\rho^{In}_i\bigg(\sum_{j=1}^m \tilde{\mathfrak{I}}_{ij}(\nu^*_{i1},\ldots, \nu^*_{im})(X_i)\bigg)
                        +\sum_{j=1}^m \Pi^{\nu^*_{ij}} \bigg( \tilde{\mathfrak{I}}_{ij}(\nu^*_{i1},\ldots, \nu^*_{im})(X_i)\bigg)\\
                    &<\rho^{In}_i(X_i)
                        -\rho^{In}_i\left(\sum_{j=1}^m \mathfrak{I}^*_{ij}(\nu^*_{i1},\ldots, \nu^*_{im})(X_i)\right)
                        +\sum_{j=1}^m\Pi^{\nu^*_{ij}}\bigg(\mathfrak{I}^*_{ij}(\nu^*_{i1},\ldots, \nu^*_{im})(X_i)\bigg).
                \end{align*}
            That is, insurer $i$ does not improve by changing their indemnity selection from $\mathfrak{I}_i^*$ to $\tilde{\mathfrak{I}}_i$.
    \end{enumerate}
\end{definition}

\medskip

\begin{definition}
\label{defn:spne}
    A strategy $\stratstar$ is a Subgame Perfect Nash Equilibria (SPNE) if the following hold:
    \begin{enumerate}[label=(\roman*)]
        \item \label{defn:spne_1}
            $\stratstar$ is an NE.
        \item \label{defn:spne_2}
            For any choice of capacity vectors $(\bm{\nu}_1,\ldots,\bm{\nu}_m) \in \mathcal{C}^{n \times m}$, there does not exist an insurer $i\in\cN$ and an indemnity selection $\tilde{\mathfrak{I}}_i$ such that
            \begin{align*}
                &\rho^{In}_i(X_i) - \rho^{In}_i\bigg(\sum_{j=1}^m \tilde{\mathfrak{I}}_{ij}(\nu_{i1},\ldots, \nu_{im})(X_i)\bigg) +\sum_{j=1}^m \Pi^{\nu_{ij}} \bigg( \tilde{\mathfrak{I}}_{ij}(\nu_{i1},\ldots, \nu_{im})(X_i)\bigg) \\
            &< \rho^{In}_i(X_i) - \rho^{In}_i\bigg(\sum_{j=1}^m \mathfrak{I}^*_{ij}(\nu_{i1},\ldots, \nu_{im})(X_i)\bigg) +\sum_{j=1}^m \Pi^{\nu_{ij}}\bigg( \mathfrak{I}^*_{ij}(\nu_{i1},\ldots, \nu_{im})(X_i)\bigg)\,.
        \end{align*}
    \end{enumerate}
\end{definition}

Note that by definition, every SPNE is an NE, but the converse does not necessarily hold.
The main difference is that Definition \ref{defn:spne}\ref{defn:spne_2} must hold for \emph{any} choice of capacity vectors $(\bm{\nu}_1,\ldots,\bm{\nu}_m) \in \mathcal{C}^{n \times m}$, instead of only holding for the specific choice $(\bm{\nu}^*_1,\ldots,\bm{\nu}^*_m)$.
For an in-depth discussion of the motivation for this distinction, we refer to \cite[Chapter 7]{maschler2020game}, for instance.

\bigskip
\section{Characterization of SPNEs}
\label{sec:SPNE_characterization}

\subsection{Backward Induction}
Our main result in this section is a characterization of SPNEs in a market with multiple insurers and reinsurers.
It is well-known (e.g., \citealt[Remark 4.50]{maschler2020game}) that all SPNEs may be found through the process of backward induction.
That is, to determine an SPNE, we first determine the optimal strategy for each insurer in the market, since the insurers are the last agents to act.
Then, given one such optimal strategy for each insurer, we then determine the pricing capacities for the reinsurers that form an NE.
This process is illustrated in the context of the present paper by the following lemma.

\medskip

\begin{lemma} 
\label{SPNEminprop}
A strategy $\stratstar$ is an SPNE if and only if the following hold:
    \begin{enumerate}[label=(\roman*)]
        \item \label{SPNEminprop_1}
        For any choice of pricing capacities $(\bm{\nu}_1,\ldots,\bm{\nu}_m)$ $\in \mathcal{C}^{n \times m}$ and for all $i\in\cN$, the indemnity structure $\mathfrak{I}^*_i(\nu_{i1},\ldots, \nu_{im})$ solves 
            \begin{equation} \label{eq:minproblemSPNE}
                \min_{\{I_{ij}\}_{j\in \cM} \in \mathcal{I}_i} \left\{ \rho^{In}_i(X_i) - \sum_{j=1}^m \rho^{In}_i(I_{ij}(X_i)) + \sum_{j=1}^m \Pi^{\nu_{ij}}(I_{ij}(X_i)) \right\}.
            \end{equation}
        
        \item \label{SPNEminprop_2}
        $(\bm{\nu^*}_1,\ldots,\bm{\nu^*}_m)$ is a Nash Equilibrium for the reduced game formed from fixing the strategy $\mathfrak{I}^*_i$ of each insurer $i\in\cN$. That is, there does not exist $j \in \cM$ and $\tilde{\bm{\nu}}\in \mathcal{C}^n$ such that:
            \begin{align*}
                &\rho_j^{Re}\left(\sum_{i=1}^n (1+\theta_j) \mathfrak{I}^*_{ij}(\tilde{\nu}_i, \nu^*_{i,-j})(X_i)\right)
                - \sum_{i=1}^n \Pi^{\tilde{\nu}_i}\left(\mathfrak{I}^*_{ij}(\tilde{\nu}_i, \nu^*_{i,-j})(X_i) \right) \\ &\quad< \rho_j^{Re}\left(\sum_{i=1}^n (1+\theta_j)\mathfrak{I}^*_{ij}(\nu^*_{i1}, \cdots, \nu^*_{im})(X_i) \right) -  \sum_{i=1}^n \Pi^{\nu^*_{ij}}\left( \mathfrak{I}^*_{ij}(\nu^*_{i1}, \cdots, \nu^*_{im})(X_i)\right)\,.
                    \stepcounter{equation}\tag{\theequation}\label{eq:no_deviation}
            \end{align*}
    \end{enumerate}
\end{lemma}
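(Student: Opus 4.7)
The plan is to establish both directions by directly unpacking Definition \ref{defn:spne} of SPNE and comparing it, clause by clause, to the two backward-induction conditions in the lemma. The essential analytic tool throughout is comonotone additivity: since every admissible $I_{ij}\in\cI_0$ is non-decreasing in $X_i$, the family $\{I_{ij}(X_i)\}_{j\in\cM}$ is pairwise comonotone, and since $\sum_{j=1}^m I_{ij}\in\cI_0$ by Assumption \ref{as:sum_1-lipschitz}, iterated comonotone additivity of the Choquet risk measure $\rho^{In}_i$ yields
\[
\rho^{In}_i\!\left(\sum_{j=1}^m I_{ij}(X_i)\right) \;=\; \sum_{j=1}^m \rho^{In}_i\!\big(I_{ij}(X_i)\big).
\]
This identity lets me freely convert between the insurer's objective as written in Definition \ref{defn:spne}\ref{defn:spne_2} and the objective in \eqref{eq:minproblemSPNE}.

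For the forward direction, I would assume the strategy is an SPNE. Condition \ref{defn:spne_2} of Definition \ref{defn:spne} says that for every pricing-capacity vector, no insurer can strictly improve by deviating; rewriting both sides via the comonotone-additivity identity above shows that $\mathfrak{I}^*_i(\nu_{i1},\ldots,\nu_{im})$ minimises the objective in \eqref{eq:minproblemSPNE}, which is exactly condition \ref{SPNEminprop_1} of the lemma. Condition \ref{SPNEminprop_2} is then immediate: an SPNE is an NE by Definition \ref{defn:spne}\ref{defn:spne_1}, and clause \ref{defn:ne_1} of Definition \ref{defn:ne}, applied in the reduced game where each insurer plays $\mathfrak{I}^*_i$, is precisely the no-deviation inequality \eqref{eq:no_deviation} stated in condition \ref{SPNEminprop_2}.

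For the backward direction, I would reverse each step. Condition \ref{SPNEminprop_1} of the lemma, again via comonotone additivity, directly implies clause \ref{defn:spne_2} of Definition \ref{defn:spne}. To finish I need the candidate strategy to also satisfy Definition \ref{defn:spne}\ref{defn:spne_1}, i.e.\ to be an NE in the sense of Definition \ref{defn:ne}: clause \ref{defn:ne_1} of that definition is exactly condition \ref{SPNEminprop_2} of the lemma, while clause \ref{defn:ne_2} is obtained by specialising condition \ref{SPNEminprop_1} of the lemma to the particular pricing capacities $(\nu^*_{i1},\ldots,\nu^*_{im})$.

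There is no substantive obstacle; the argument is really a bookkeeping exercise in translating between Definitions \ref{defn:ne}--\ref{defn:spne} and the two clauses of the lemma. The one point that genuinely requires care is the comonotone-additivity step, and with it the verification that Assumption \ref{as:sum_1-lipschitz} is invoked correctly, so that both $\sum_j I_{ij}(X_i)$ and each individual $I_{ij}(X_i)$ lie in a common comonotone class for $\rho^{In}_i$; without the no-sabotage condition on the sum, the two functional forms of the insurer's objective would fail to match and the equivalence would break.
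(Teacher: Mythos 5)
Your proof is correct and follows essentially the same route as the paper: treat condition~\ref{SPNEminprop_1} as a restatement of Definition~\ref{defn:spne}\ref{defn:spne_2}, observe that specialising it to $(\bm{\nu}^*_1,\ldots,\bm{\nu}^*_m)$ yields Definition~\ref{defn:ne}\ref{defn:ne_2}, and identify condition~\ref{SPNEminprop_2} with Definition~\ref{defn:ne}\ref{defn:ne_1}. You also rightly flag that the ``restatement'' is not verbatim: passing from $\rho^{In}_i\big(\sum_{j} I_{ij}(X_i)\big)$ in Definition~\ref{defn:spne}\ref{defn:spne_2} to $\sum_{j}\rho^{In}_i\big(I_{ij}(X_i)\big)$ in \eqref{eq:minproblemSPNE} uses comonotone additivity, which the paper leaves implicit.

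One small correction to your closing remark, though: that particular identity does \emph{not} rely on Assumption~\ref{as:sum_1-lipschitz}. The random variables $I_{ij}(X_i)$, $j\in\cM$, are pairwise comonotone simply because each $I_{ij}\in\cI_0$ is non-decreasing; iterated comonotone additivity then gives $\rho^{In}_i\big(\sum_{j} I_{ij}(X_i)\big)=\sum_{j}\rho^{In}_i\big(I_{ij}(X_i)\big)$ with no further hypothesis on the sum. Where Assumption~\ref{as:sum_1-lipschitz} genuinely matters is one step earlier, in writing the insurer's post-transfer risk as $\rho^{In}_i(X_i)-\rho^{In}_i\big(\sum_j I_{ij}(X_i)\big)+\sum_j\pi_{ij}$: splitting $X_i-\sum_j I_{ij}(X_i)$ from $\sum_j I_{ij}(X_i)$ requires the retained risk $x\mapsto x-\sum_j I_{ij}(x)$ to be non-decreasing, i.e.\ the $1$-Lipschitz (no-sabotage) condition on $\sum_j I_{ij}$. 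That split is already baked into the statement of Definition~\ref{defn:spne}\ref{defn:spne_2} and hence is not an issue in this lemma; but it is worth placing the assumption correctly, since conflating the two applications of comonotone additivity would misstate which hypothesis does which work.
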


\begin{proof}
    Note that point \ref{SPNEminprop_1} is a restatement of Definition \ref{defn:spne}\ref{defn:spne_2}.
    Hence, by Definition \ref{defn:spne}, it suffices to show that if \ref{SPNEminprop_1} holds, a strategy $\stratstar$ is an NE if and only if \ref{SPNEminprop_2} holds.
    To this end, if \ref{SPNEminprop_1} holds, it follows that for all $i\in\cN$, the indemnity structure $\mathfrak{I}^*_i(\nu_{i1}^*,\ldots,\nu_{im}^*)$ solves
        \[
            \min_{\{I_{ij}\}_{j\in \cM}\in\mathcal{I}_i}
                \left\{\rho^{In}_i(X_i)
                -\sum_{j=1}^m\rho^{In}_i(I_{ij}(X_i))
                +\sum_{j=1}^m\Pi^{\nu_{ij}^*}(I_{ij}(X_i))\right\}\,,
        \]
    which implies that Definition \ref{defn:ne}\ref{defn:ne_2} holds.
    Since \ref{SPNEminprop_2} is a restatement of Definition \ref{defn:ne}\ref{defn:ne_1}, the result follows.
\end{proof}

Lemma \ref{SPNEminprop} implies that the process of characterizing SPNE strategies involves first solving for insurer strategies before solving for reinsurer strategies.
To this end, we first determine all candidate insurer strategies by identifying the indemnity structures solving \eqref{eq:minproblemSPNE}.
This is given by the following result, which characterizes each indemnity function through its derivative using the so-called Marginal Indemnification Function approach.

\medskip

\begin{proposition}
    \label{propcapacity}
    Let $(\bm{\nu}_1,\ldots,\bm{\nu}_m)\in \mathcal{C}^{n \times m}$ be fixed pricing capacities, and define, for each $i\in\cN$ and $z\in[0,M]$,
        \begin{align*}
            \underline{\nu}_i(X_i>z)&:=\min_{j\in \cM} \nu_{ij}(X_i>z)\,,\\
            \mathcal{M}_i(z)&:=\{j\in\cM\,:\,\nu_{ij}(X_i>z)=\underline{\nu}_i(X_i>z)\}\,. 
        \end{align*}
    
    \noindent Then for a fixed $i\in\cN$, an indemnity structure $\{I^*_{ij}\}_{j \in \cM}\in\cI_i$ is a solution to \eqref{eq:minproblemSPNE} if and only if for each $j\in\cM$ and $x\in[0,M]$, the indemnity $I_{ij}^*$ can be written as
        \begin{equation}
        \label{eq:gamma_defn}
            I^*_{ij}(x) = \int_0^x \gamma^*_{ij}(z)\,dz\,,
        \end{equation}
    where each $\gamma_{ij}^*$ is a $[0,1]$-valued measurable function that satisfies, for almost all $z\in[0,M]$,
        \[
            \gamma^*_{ij}(z)=h_{ij}(z)\One_{\{j \in \mathcal{M}_i(z)\}},
        \]
    and  
        \[
            \sum_{j=1}^m\gamma^*_{ij}(z)=\begin{cases}
                1,&\mbox{ if}\quad\alpha_i(X_i >z)>\underline{\nu}_i(X_i>z)\\
                H_i(z),&\mbox{ if}\quad\alpha_i(X_i >z)=\underline{\nu}_i(X_i>z)\\
                0,&\mbox{ if}\quad\alpha_i(X_i >z)<\underline{\nu}_i(X_i>z)
            \end{cases}\,,
        \]
    for some $[0,1]$-valued measurable functions $H_i$ and $h_{ij}$.
\end{proposition}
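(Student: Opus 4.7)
The approach I would take is to exploit the Marginal Indemnification Function (MIF) representation in order to reduce the minimization over indemnity structures to a pointwise linear program at each layer $z\in[0,M]$.

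First, I would recall that each $I_{ij}\in\cI_0$ is absolutely continuous with derivative $\gamma_{ij}:=I'_{ij}$ satisfying $\gamma_{ij}(z)\in[0,1]$ for almost every $z\in[0,M]$, and that Assumption \ref{as:sum_1-lipschitz} is equivalent to $\sum_{j=1}^m\gamma_{ij}(z)\leq 1$ a.e. Combining the layer-cake representation of the Choquet integral with the change of variables $t=I_{ij}(z)$ (so that $dt=\gamma_{ij}(z)\,dz$, the flat regions of $I_{ij}$ contributing nothing since $\gamma_{ij}$ vanishes there) yields the standard identity, for any $\nu\in\cC$,
$$\int I_{ij}(X_i)\,d\nu=\int_0^M\nu(X_i>z)\,\gamma_{ij}(z)\,dz.$$
Applying this to both $\rho^{In}_i(I_{ij}(X_i))$ and $\Pi^{\nu_{ij}}(I_{ij}(X_i))$, the objective of \eqref{eq:minproblemSPNE} rewrites as
$$\rho^{In}_i(X_i)+\int_0^M\sum_{j=1}^m\bigl[\nu_{ij}(X_i>z)-\alpha_i(X_i>z)\bigr]\gamma_{ij}(z)\,dz.$$

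Since both the integrand and the admissibility constraints $\gamma_{ij}(z)\geq 0$ and $\sum_{j=1}^m\gamma_{ij}(z)\leq 1$ decouple across $z$, the minimization reduces to solving, for almost every $z\in[0,M]$, the finite-dimensional linear program
$$\min\Bigl\{\sum_{j=1}^m c_j(z)\gamma_j\ :\ \gamma_j\geq 0,\ \sum_{j=1}^m\gamma_j\leq 1\Bigr\},$$
where $c_j(z):=\nu_{ij}(X_i>z)-\alpha_i(X_i>z)$. Setting $c^*(z):=\min_j c_j(z)=\underline{\nu}_i(X_i>z)-\alpha_i(X_i>z)$ and noting that $\{j:c_j(z)=c^*(z)\}=\mathcal{M}_i(z)$, the LP has an elementary description: any optimum must vanish for indices $j\notin\mathcal{M}_i(z)$, while the total mass $\sum_{j=1}^m\gamma_j$ must equal $1$ when $c^*(z)<0$, must equal $0$ when $c^*(z)>0$, and may lie anywhere in $[0,1]$ when $c^*(z)=0$. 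Translating these three sign regimes into the comparison between $\alpha_i(X_i>z)$ and $\underline{\nu}_i(X_i>z)$ produces exactly the three-case formula in the statement, with the free total mass encoded by $H_i(z)$ and its distribution across $\mathcal{M}_i(z)$ encoded by the functions $h_{ij}(z)$.

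The main obstacle I anticipate is the measurability bookkeeping: one must verify that $z\mapsto\nu_{ij}(X_i>z)$ is measurable (this is immediate from its monotonicity in $z$), that $z\mapsto\mathcal{M}_i(z)$ admits measurable selections, and that the resulting $\gamma^*_{ij}$ actually yield admissible indemnities, i.e.\ $I^*_{ij}(x)=\int_0^x\gamma^*_{ij}(z)\,dz\in\cI_0$ together with $\sum_{j=1}^m I^*_{ij}\in\cI_0$. Both are standard but require care. For the converse direction of the if and only if, any $\gamma^*_{ij}$ of the stated form attains the pointwise LP optimum almost everywhere and hence the corresponding indemnity structure attains the integral minimum in \eqref{eq:minproblemSPNE}.
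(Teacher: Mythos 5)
Your argument is correct and, in substance, the same as the paper's: for a fixed $i$ the problem decouples into a single-insurer, multi-reinsurer optimization, which is then solved via the marginal indemnification representation and a pointwise layer-by-layer linear program. The only difference is that the paper outsources this step by citing Theorem 3.1 of Boonen and Ghossoub (2021), whereas you reconstruct the argument that the cited theorem encapsulates.
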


\begin{proof}
    Note that if prices are fixed, the choice of each individual insurer $i\in\cN$ is not affected by the decisions of other insurers in the market.
    Hence, when $i\in\cN$ is fixed, this market reduces to the case with the one insurer $i$ and multiple reinsurers $j\in\cM$.
    It then follows that an application of \cite[Theorem 3.1]{BOONEN202123} to each $i\in\cN$ yields the result.
\end{proof}

Note that Proposition \ref{propcapacity} characterizes each indemnity function $I_{ij}^*$ using its derivative $\gamma_{ij}^*$ and the relationship given by \eqref{eq:gamma_defn}.
Hence, we refer to the $[0,1]$-valued function $\gamma_{ij}^*$ as the \emph{marginal indemnification} of $I_{ij}^*$. Furthermore, Proposition \ref{propcapacity} implies that the solutions to \eqref{eq:minproblemSPNE} are not unique, since multiple reinsurers may be charging the lowest price (i.e., the set $\cM_i(z)$ is not a singleton).
Different indemnity structures may also result from the flexibility when $\alpha_i=\underline{\nu}_i$ for an insurer $i$.
To facilitate an explicit characterization of SPNEs, we consider only those solutions to \eqref{eq:minproblemSPNE} that satisfy the property of \emph{generous distribution} introduced in \cite{zhu2023equilibria}.

\medskip

\begin{definition}[Generous Distribution]
\label{Generous Distribution}
    For a fixed $i\in\cN$, let $\{I_{ij}^*\}_{j\in\cM}$ be a solution to \eqref{eq:minproblemSPNE}.
    By \eqref{eq:gamma_defn}, we may write $I^*_{ij}(x)=\int_0^x\gamma^*_{ij}(z)\,dz$, where $\gamma^*_{ij}$ is a $[0,1]$-valued function.
    For each $z\in[0,M]$, let
        \begin{align*}
            \gamma^*_{i0}(z)&:=1-\sum_{j=1}^m\gamma^*_{ij}(z)\,,\\
            \tau_{i,0}(X_i>z)&:=\alpha_i(X_i>z)\,,\quad\mbox{and}\,,\\
            \tau_{i,j}(X_i>z)&:=(1+\theta_j)\tau_j(X_i>z)\quad\forall j\in\cM\,.
     \end{align*}
     
     \noindent Then $\{I_{ij}^*\}_{j\in\cM}$ \emph{distributes generously} if for almost all $z\in[0,1]$ and all $k \in\cM\cup\{0\}$, we have $\gamma^*_{ik}(z)=0$ if both of the following hold: 
        \begin{itemize}
            \item $\nu_{ik}(X_i>z) = \min_{j \in \cM \cup \{0\}} \{ \nu_{ij}(X_i>z)\}$, and,
            \item There exists $k'\in \big(\cM \cup \{0\}\big) \setminus \{k\}$ such that
                \[
                    \nu_{ik'}(X_i>z) = \nu_{ik}(X_i>z) \,\,\, \text{but}\,\,\, \tau_{i,k'}(X_i>z) < \tau_{i,k}(X_i>z)\,.
                \]
        \end{itemize}
\end{definition}

For the remainder of this section, we impose the following assumption.

\medskip

\begin{assumption}
    We assume that each insurer $i \in \mathcal{N}$ always selects an optimal indemnity that distributes generously.
    That is, for every fixed set of capacities $(\bm{\nu}_1,\ldots,\bm{\nu}_m)\in \mathcal{C}^{n \times m}$, the indemnity structure $\mathfrak{I}^*_i(\nu_{i1},\ldots,\nu_{im})$ distributes generously in the sense of Definition \ref{Generous Distribution}, for all $i\in\cN$.
    Let $\aleph$ denote the set of such strategies $(\mathfrak{I}_1^*,\ldots,\mathfrak{I}_n^*)$.
\end{assumption}

It now remains to determine the reinsurer strategies after fixing some insurer strategies $(\mathfrak{I}_1^*,\ldots,\mathfrak{I}_n^*)\in\aleph$.
These reinsurer strategies are determined constructively by using the following definitions.

\medskip

\begin{definition}
    For each insurer  $i \in \mathcal{N}$, we define the capacity $\Bar{\tau}_i$ to be the pointwise second-lowest function of the set $\{\tau_{i,0},\tau_{i,1}, \ldots, \tau_{i,m}\}$. We refer to $\Bar{\tau}_i$ as the second-lowest true preference for insurer $i$.
\end{definition}

\medskip

\begin{definition}
    \label{defn:beth}
    Define the class of reinsurer strategies $\beth \subseteq \mathcal{C}^{n\times m}$ as the set of all $(\bm{\nu}_1,\ldots,\bm{\nu}_m)$ such that for all $i\in\cN$ and almost all $z \in [0,M]$,
    \begin{enumerate}
        \item $\min_{j\in \cM} \nu_{ij}(X_i>z)=\Bar{\tau}_i(X_i>z)$.
            That is, the lowest price charged by any reinsurer to indemnify the tail risk $X_i>z$ is given by the second-lowest true preference $\bar{\tau}_i$.
        \item  There exists $k,k'\in\cM\cup\{0\}\,,k \neq k'$ such that
            \[
                \min_{j \in \cM}\nu_{ij}(X_i>z) = \nu_{ik}(X_i>z) = \nu_{ik'}(X_i>z)\,.
            \]
            That is, for each $i\in\cN$, there are always at least two reinsurers charging a price equal to the second-lowest true preference.
        \item We have $\mathcal{T}_i(z) \neq \varnothing \implies \mathcal{T}_i(z) \cap \mathcal{M}_i(z) \neq \varnothing$ , where
            \[
                \mathcal{T}_i(z):=\left\{j\in\cM\,\,\middle|\,\,\tau_{i,j}(X_i>z)  = \min_{k \in \cM \cup \{0\}} \tau_{i,k}(X_i>z)\right\}\,.
            \]
            In other words, for each $i\in\cN$, out of the reinsurers $j$ with the lowest true preference $\tau_{i,j}$, at least one is charging the price $\bar{\tau}_i$ to insurer $i$.           
    \end{enumerate}
\end{definition}

The main result of \cite{zhu2023equilibria} shows that in a model with only one policyholder, every strategy in $\beth\times\aleph$ is an SPNE.
However, since there are multiple policyholders in our model, each reinsurer must consider the risk of the sum of all contracts that they have written.
We show in the following that a characterization is possible under either of two additional assumptions: (i) that the reinsurers are risk neutral, or (ii) that the initial risks are comonotone.

\medskip
\subsection{Risk-Neutral Reinsurers}
\label{subsec:expectation}

Suppose that each reinsurer $j\in\cM$ wishes to minimize the expected risk according to their own subjective beliefs, represented by probability measures on the space $(S,\Sigma)$.
In other words, we assume that $\tau_j$ is a probability measure, for all $j\in\cM$.
In this case, each reinsurer's evaluation of the sum of contracts $\sum_{i=1}^nI_{ij}$ is the sum of the expectations of each contract, as shown by the following:
    \begin{align*}
        \rho_j^{Re}\left(\sum_{i=1}^n(1+\theta_j)I_{ij}(X_i)\right)
            &=\mathbb{E}^{\tau_j}\left(\sum_{i=1}^n(1+\theta_j)I_{ij}(X_i)\right)
        =(1+\theta_j)\sum_{i=1}^n \mathbb{E}^{\tau_j}\big(I_{ij}(X_i)\big)\,.
    \end{align*}

\noindent Our main result provides a characterization of SPNEs in this case.
Its proof is provided in Appendix \ref{appendix-proofs}.

\medskip

\begin{theorem}\label{ThSPNE}
    Suppose that $\tau_j$ is a probability measure for all $j\in\cM$.
    Let $\mathfrak{I}^*_i \in \aleph$ and $(\bm{\nu}^*_1,\ldots,\bm{\nu}^*_m) \in \beth$. Then the strategy $\stratstar$ is an SPNE.
\end{theorem}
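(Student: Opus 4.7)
The plan is to invoke Lemma \ref{SPNEminprop}, which reduces the verification to two conditions: the insurer-optimality condition \ref{SPNEminprop_1} and the reinsurer-side Nash condition \ref{SPNEminprop_2}. Part \ref{SPNEminprop_1} is immediate from the construction of $\aleph$: by definition, every $\mathfrak{I}^*_i\in\aleph$ selects, for any input $(\nu_{i1},\ldots,\nu_{im})$, an indemnity structure of the form described in Proposition \ref{propcapacity}, which is precisely the characterization of all solutions to \eqref{eq:minproblemSPNE}. So only condition \ref{SPNEminprop_2} requires substantive work.

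The crucial observation is that when each $\tau_j$ is a probability measure, the Choquet integral with respect to $\tau_j$ reduces to a linear expectation. Consequently, for any reinsurer $j\in\cM$ and any prospective deviation $\tilde{\bm{\nu}}\in\cC^n$, the reinsurer's objective decomposes additively over insurers:
\begin{equation*}
    \rho_j^{Re}\!\left(\sum_{i=1}^n(1+\theta_j)\mathfrak{I}^*_{ij}(\tilde{\nu}_i,\nu^*_{i,-j})(X_i)\right) - \sum_{i=1}^n\Pi^{\tilde{\nu}_i}\!\left(\mathfrak{I}^*_{ij}(\tilde{\nu}_i,\nu^*_{i,-j})(X_i)\right) = \sum_{i=1}^n F_i(\tilde{\nu}_i),
\end{equation*}
where each summand $F_i(\tilde{\nu}_i)$ depends only on the single entry $\tilde{\nu}_i$ (since $\mathfrak{I}^*_i$ is the response of insurer $i$ alone, and the premium from insurer $i$ involves only $\tilde\nu_i$). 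Thus reinsurer $j$'s deviation problem separates into $n$ independent one-insurer subproblems, and it suffices to show that $\nu^*_{ij}$ minimizes $F_i$ over $\tilde{\nu}_i\in\cC$ for each $i\in\cN$.

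For each fixed $i$, this minimization coincides exactly with the single-insurer, multi-reinsurer reinsurer-side subgame studied in \cite{zhu2023equilibria}. The three defining conditions of $\beth$ are direct per-insurer translations of the SPNE conditions in that paper: the lowest posted price matches the second-lowest true preference $\bar\tau_i$, at least two agents (reinsurers or the insurer itself) attain that minimum, and a reinsurer with the lowest true preference $\tau_{i,j}$ is always among those posting that minimum. Combined with the generous distribution requirement on $\mathfrak{I}^*_i$, these ensure that $\nu^*_{ij}$ is a best response for reinsurer $j$ vis-\`a-vis insurer $i$: undercutting $\bar\tau_i$ yields a premium below the reinsurer's own marginal cost on the newly-acquired mass, while raising the price causes insurer $i$'s marginal indemnity to shift (by generous distribution) to the competing tied reinsurer, leaving reinsurer $j$'s payoff unchanged. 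Applying the one-insurer result of \cite{zhu2023equilibria} to each summand $F_i$, then summing, delivers condition \ref{SPNEminprop_2}.

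I expect the main obstacle to be the careful tie-breaking bookkeeping: when reinsurer $j$ deviates to $\tilde\nu_i$, the insurer's response $\mathfrak{I}^*_i(\tilde\nu_i,\nu^*_{i,-j})$ is dictated by Definition \ref{Generous Distribution}, and one must verify that reinsurer $j$'s realized marginal indemnity $\gamma^*_{ij}(z)$ responds in the way implicit in the conditions of $\beth$ — in particular, that $\gamma^*_{ij}(z)=0$ whenever some other reinsurer ties the minimum price but has strictly lower true preference. Once this pointwise analysis is in place for each $i$, the additive decomposition from the linearity of $\tau_j$ elevates the single-insurer result to the full multi-insurer statement, completing the proof.
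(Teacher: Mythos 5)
Your proposal is correct and follows essentially the same route as the paper: apply Lemma \ref{SPNEminprop} for backward induction, dispatch condition \ref{SPNEminprop_1} via the definition of $\aleph$ and Proposition \ref{propcapacity}, then exploit the linearity of $\tau_j$ to split the reinsurer's objective additively over insurers $i$, reducing to the single-insurer setting of \cite{zhu2023equilibria}. The one difference is cosmetic: where you propose to conclude by citing the one-insurer SPNE result of \cite{zhu2023equilibria} directly for each summand $F_i$, the paper instead cites only \cite[Proposition 3.9]{zhu2023equilibria} (to identify the equilibrium payoff as $\sum_i\int_0^M\min\{\tau_{i,j}(X_i>z)-\bar\tau_i(X_i>z),\,0\}\,dz$) and then redoes the per-$z$ case analysis from scratch, verifying pointwise for each tail level $z$ and each sign pattern of $\tilde\nu_i(X_i>z)-\bar\tau_i(X_i>z)$ and $\tau_{i,j}(X_i>z)-\bar\tau_i(X_i>z)$ that no deviation improves the objective; your plan would delegate that bookkeeping to the cited theorem. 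Both deliver condition \ref{SPNEminprop_2} and hence the result.
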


\medskip
\subsection{Comonotone Risks}
\label{subsec:comonotone}

Alternatively, recall that since every $\rho_j^{Re}$ is a Choquet risk measure, they are comonotone additive.
We now show that if the initial risk random variables $X_1,\ldots,X_n$ are comonotone, then a similar characterization result to that of Theorem \ref{ThSPNE} is possible.
While it may not be realistic to assume that the initial risks of each insurer in the market are comonotone, recall that comonotonicity of losses represents a worst-case dependence structure for each reinsurer.
Consider a scenario where there is a probability measure $\Pr$ on the measurable space $(S,\Sigma)$, and that each $\rho_j^{Re}$ is law invariant with respect to $\Pr$.
That is, if two random variables $X$ and $Y$ have the same distribution under $\Pr$, we have 
    \[
        \rho_j^{Re}(X)=\rho_j^{Re}(Y)
    \]
for all $j\in\cM$.

Suppose that each reinsurer knows the distribution of each $X_i$ with respect to the measure $\Pr$, but does not necessarily know the dependence structure between these risks.
It would be prudent to evaluate risk based on the worst-case dependence in such a situation.
In other words, the reinsurer could evaluate the sum of contracts $\sum_{i=1}^nI_{ij}(X_i)$ by
    \[
        \rho_j^{Re}\left(\sum_{i=1}^n(1+\theta_j)I_{ij}(\tilde{X}_i)\right)\,,
    \]
where $(\tilde{X}_1,\ldots,\tilde{X}_n)$ is a comonotone vector of random variables such that $\tilde{X}_i$ and $X_i$ have the same distribution.
Such a comonotone vector is guaranteed to exist as long as the probability space $(S,\Sigma,\Pr)$ is atomless \cite[Lemma A.23]{follmer2011stochastic}.
Indeed, in  this case there exists a uniform random variable $U$ on the probability space $(S,\Sigma,\Pr)$, and taking $\tilde{X}_i:=F_{X_i,\Pr}^{-1}(U)$ for all $i\in\cN$ gives a comonotone vector of risks, where $F_{X_i,\Pr}^{-1}$ denotes a quantile function under $\Pr$ for $X_i$.
It then follows that
    \begin{align*}
        \rho_j^{Re}\left(\sum_{i=1}^n(1+\theta_j)I_{ij}(\tilde{X}_i)\right)
            &=(1+\theta_j)\sum_{i=1}^n\rho_j^{Re}(I_{ij}(\tilde{X}_i))
        =(1+\theta_j)\sum_{i=1}^n\rho_j^{Re}(I_{ij}(X_i))\,,
    \end{align*}
where the last equality follows from law invariance of $\rho_j^{Re}$.
Hence, in this scenario, it may be assumed without loss of generality that $X_1,\ldots,X_n$ are comonotone to begin with.
It then follows through similar arguments that a characterization of SPNE can be obtained, as given by the following.

\medskip

\begin{theorem}
\label{ThSPNEcom}
    Suppose that $X_1,\ldots,X_n$ are comonotone.
    Let $\mathfrak{I}^*_i \in \aleph$ and $(\bm{\nu}^*_1,\ldots,\bm{\nu}^*_m) \in \beth$. Then the strategy $\stratstar$ is an SPNE.
\end{theorem}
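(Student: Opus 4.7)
The plan is to mirror the strategy used for Theorem \ref{ThSPNE}, exploiting comonotone additivity of each $\rho_j^{Re}$ in place of additivity of expectation. By Lemma \ref{SPNEminprop}, condition \ref{SPNEminprop_1} holds automatically because any $\mathfrak{I}_i^* \in \aleph$ solves \eqref{eq:minproblemSPNE} by the very definition of $\aleph$ (applied via Proposition \ref{propcapacity}). Hence it suffices to verify condition \ref{SPNEminprop_2}, namely that no reinsurer $j \in \cM$ has a profitable unilateral deviation from $\bm{\nu}_j^*$ to some $\tilde{\bm{\nu}} \in \cC^n$ when the other reinsurers keep $\bm{\nu}_{-j}^*$ fixed and insurers respond through $\mathfrak{I}_i^*$.

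The key observation is the following decoupling argument. By Proposition \ref{propcapacity}, for any pricing capacities, each best-response indemnity $\mathfrak{I}_{ij}^*(\nu_{i1},\ldots,\nu_{im})$ admits the representation $I_{ij}(x)=\int_0^x \gamma_{ij}(z)\,dz$ with $\gamma_{ij}$ taking values in $[0,1]$; in particular, each such $I_{ij}$ is nondecreasing on $[0,M]$. Since $X_1,\ldots,X_n$ are comonotone by assumption, and nondecreasing transformations preserve comonotonicity, the random variables $I_{1j}(X_1),\ldots,I_{nj}(X_n)$ are comonotone for every admissible choice of indemnities. Comonotone additivity of the Choquet risk measure $\rho_j^{Re}$ then yields, for any pricing configuration (the equilibrium profile as well as any deviation),
\begin{align*}
\rho_j^{Re}\!\left(\sum_{i=1}^n (1+\theta_j)\,I_{ij}(X_i)\right) - \sum_{i=1}^n \Pi^{\nu_{ij}}\!\bigl(I_{ij}(X_i)\bigr)
&= \sum_{i=1}^n \Bigl[(1+\theta_j)\,\rho_j^{Re}\bigl(I_{ij}(X_i)\bigr) - \Pi^{\nu_{ij}}\bigl(I_{ij}(X_i)\bigr)\Bigr].
\end{align*}
Thus reinsurer $j$'s optimization problem decouples across insurers: a proposed deviation $\tilde{\bm{\nu}}=(\tilde{\nu}_1,\ldots,\tilde{\nu}_n)$ lowers the objective on the left-hand side of \eqref{eq:no_deviation} if and only if at least one coordinate $\tilde{\nu}_i$ strictly improves the single-insurer subproblem for some $i \in \cN$.

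Each single-insurer subproblem is then precisely the reinsurer-side problem analyzed in \cite{zhu2023equilibria}, because for fixed $i$, the game reduces to one policyholder (insurer $i$) facing $m$ competing reinsurers whose prices are $\nu_{i1},\ldots,\nu_{im}$, with insurer $i$ responding via generous distribution. The three conditions defining $\beth$ (Definition \ref{defn:beth}) are tailored so that no such unilateral deviation is profitable: condition (1) ensures that reinsurer $j$ is already offering the lowest attainable price equal to the second-lowest true preference $\bar{\tau}_i$, so any undercutting pushes prices strictly below $\tau_{i,j}$ and thereby yields a strictly worse risk-premium trade-off for $j$; condition (2) guarantees an alternative reinsurer ready to absorb the demand if $j$ raises its price, so any upward deviation strictly loses profitable business under generous distribution; condition (3), together with generous distribution, ensures $j$ is already selling the Pareto-best-matched layers it can profitably supply. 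Invoking the corresponding result of \cite{zhu2023equilibria} coordinate-wise, no $\tilde{\nu}_i$ strictly improves any summand; summing, inequality \eqref{eq:no_deviation} fails, giving condition \ref{SPNEminprop_2}.

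The main obstacle is conceptual rather than computational: one must be confident that the comonotone-additive decomposition is available not only at the candidate equilibrium but also under \emph{every} contemplated deviation. This is where Proposition \ref{propcapacity} is crucial, since it forces the insurers' best responses to remain within nondecreasing $1$-Lipschitz indemnities for all pricing profiles; combined with comonotonicity of the $X_i$, this preserves comonotonicity of $\{I_{ij}(X_i)\}_{i\in\cN}$ throughout the argument and reduces the proof to the single-policyholder analysis already established in \cite{zhu2023equilibria}.
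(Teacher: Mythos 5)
Your proof is correct and takes essentially the same approach as the paper: exploit that nondecreasing indemnities applied to comonotone $X_i$ yield comonotone contract payoffs for each reinsurer, so comonotone additivity of $\rho_j^{Re}$ decouples the reinsurer's objective into per-insurer terms, and then each term is handled by the single-policyholder argument already established for Theorem \ref{ThSPNE} (equivalently, the $\beth$-condition case analysis from \cite{zhu2023equilibria}). Your explicit remark that the decomposition must remain valid under every contemplated deviation -- because Proposition \ref{propcapacity} keeps best-response indemnities nondecreasing for all pricing profiles -- is a point the paper leaves implicit, and is a worthwhile clarification.
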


\begin{proof}
    Since $X_1,\ldots,X_n$ are comonotone and each indemnity function $\mathfrak{I}^*_{ij}(\nu^*_{i1},\cdots,\nu^*_{im})$ is non-decreasing, we have that for each $j\in\cM$, 
        \[
             \mathfrak{I}^*_{1j}(\nu^*_{11},\cdots,\nu^*_{1m})(X_1),\,
                \ldots,\,
                \mathfrak{I}^*_{nj}(\nu^*_{n1},\cdots,\nu^*_{nm})(X_n)
        \]
    are comonotone as well.
    Since $\rho_j^{Re}$ is comonotone additive, we have
        \begin{align*}
            &\quad\rho_j^{Re}\left(\sum_{i=1}^n (1+\theta_j)\mathfrak{I}^*_{ij}(\nu^*_{i1}, \cdots, \nu^*_{im})(X_i) \right) -  \sum_{i=1}^n \Pi^{\nu^*_{ij}}\left( \mathfrak{I}^*_{ij}(\nu^*_{i1}, \cdots, \nu^*_{im})(X_i)\right)\\            
            &=  \sum_{i=1}^n\rho_j^{Re}\bigg((1+\theta_j) \mathfrak{I}_{ij}^*(\nu^*_{i1}, \ldots, \nu^*_{im})(X_i)\bigg) -\sum_{i=1}^n \Pi^{\nu^*_{ij}}(\mathfrak{I}_{ij}^*(\nu^*_{i1}, \ldots, \nu^*_{im})(X_i)\big)\\
            &= \sum_{i=1}^n \bigg( \int (1+\theta_j)\mathfrak{I}_{ij}^*(\nu^*_{i1}, \ldots, \nu^*_{im})(X_i)\,d\tau_j - \int_0^M \nu^*_{ij}(X_i>z)\gamma^*_{ij}(z)\,dz\bigg)\,,
        \end{align*}

\noindent where each $\gamma^*_{ij}$ is the marginal indemnification of $I_{ij}^*$.
    The remainder of the proof follows similarly to that of Theorem \ref{ThSPNE}.
\end{proof}

\medskip
\subsection{Pareto Efficiency of SPNEs}
\label{SecPO}

Moreover, when SPNEs can be characterized using either Theorem \ref{ThSPNE} or \ref{ThSPNEcom}, these equilibria result in Pareto-optimal allocations to each agent.
Recall that an \emph{allocation} is a pair $\allocation\in\cI\times\mathbb{R}^{n\times m}$, representing both an admissible indemnity structure and the premium amount for each reinsurance contract.

\medskip

\begin{definition}
    \label{IRdef}
    An allocation $\allocationstar\in\cI\times\mathbb{R}^{n\times m}$ is individually rational (IR) if it incentivizes all agents to participate in the market.
    That is,
         \[
            \rho^{In}_i \left(X_i-\sum_{j=1}^mI_{ij}^*(X_i) +\sum_{j=1}^m\pi_{ij}^*\right) \leq \rho^{In}_i(X_i)
        \] 
    and 
        \[
            \rho^{Re}_j\left(\sum_{i=1}^n \big( (1+\theta_j) I_{ij}^*(X_i) - \pi_{ij}^*\big)\right) \leq 0\,,
        \]
    for all $i\in\cN$ and $j\in\cM$.
\end{definition}

\medskip

\begin{definition}
    \label{POdef}
    An allocation $\allocationstar\in\cI\times\mathbb{R}^{n\times m}$ is Pareto optimal (PO) if it is IR and there does not exist another allocation $\allocationtilde$ such that
    \begin{align*}
        \rho^{In}_i\left(X_i-\sum_{j=1}^m\tilde{I}_{ij}(X_i) +\sum_{j=1}^m\tilde{\pi}_{ij}\right)
            &\leq \rho^{In}_i \left(X_i-\sum_{j=1}^mI_{ij}^*(X_i) +\sum_{j=1}^m\pi_{ij}^*\right)\\
        \rho^{Re}_j\left(\sum_{i=1}^n\big((1+\theta_j) \tilde{I}_{ij}(X_i) - \tilde{\pi}_{ij}\big)\right)
            &\leq \rho^{Re}_j\left(\sum_{i=1}^n \big((1+\theta_j)I_{ij}^*(X_i)-\pi_{ij}^*\big)\right)
    \end{align*} 
    for all $i\in\cN$ and $j\in\cM$, with at least one strict inequality.
\end{definition}

The following provides an explicit characterization of PO allocations in this market. Its proof can be found in Appendix \ref{appendix-proofs}.

\medskip

\begin{proposition}
\label{prop:optimal_indemnities}
    Suppose that either $\tau_j$ is a probability measure for all $j\in\cM$, or that $X_1,\ldots,X_n$ are comonotone.
    Then an allocation $\allocationstar\in\cI\times\mathbb{R}^{n\times m}$ is Pareto optimal if and only if the following hold:
        \begin{enumerate}[label=(\roman*)]
            \item \label{prop:optimal_indemnities_1}
                The marginal indemnifications $\gamma_{ij}^*$ satisfy
                    \[
                        \sum_{j \in \cL_i(z)}\gamma_{ij}^*(z) =1\quad\mbox{and}\quad\sum_{j \in\cM\setminus\cL_i(z)}\gamma_{ij}^*(z) =0\,,
                    \]
                    for each $i\in\cN$, where
                    \[
                        \cL_i(z):= \left\{j \in \cM\cup \{0\}\,\,\middle|\,\, \tau_{i,j}(X_i>z) = \min_{k \in \cM \cup \{0\}} \tau_{i,k}(X_i>z)\right\}\,.
                    \]
            \item \label{prop:optimal_indemnities_2}
                The premia $\{\pi^*_{ij}\}_{i\in\cN,j\in\cM}$ are chosen such that $\allocationstar$ is individually rational.
        \end{enumerate}
\end{proposition}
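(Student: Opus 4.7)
The plan is to reduce Pareto optimality to the minimization of the market's \emph{aggregate risk}. Using translation invariance and comonotone additivity of each $\rho_i^{In}$ across the collection $\{I_{ij}(X_i)\}_{j\in\cM}$ (pairwise comonotone, being non-decreasing functions of the same $X_i$), together with either linearity of $\rho_j^{Re}$ when $\tau_j$ is a probability measure, or comonotone additivity of $\rho_j^{Re}$ applied to $\{I_{ij}(X_i)\}_{i\in\cN}$ (pairwise comonotone when $X_1,\ldots,X_n$ are), I would show that the sum of the risks of all $n+m$ agents is independent of the premia (they cancel pairwise) and, after substituting the Marginal Indemnification Function representation $I_{ij}(x) = \int_0^x \gamma_{ij}(z)\,dz$ of Proposition \ref{propcapacity}, admits the layer form
\begin{equation*}
T(\{\gamma_{ij}\}) \;=\; \int_0^M \sum_{i=1}^n \sum_{k=0}^m \tau_{i,k}(X_i>z)\,\gamma_{ik}(z)\,dz,
\end{equation*}
where $\gamma_{i0}(z) := 1 - \sum_{j=1}^m \gamma_{ij}(z)$. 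For each fixed $(i,z)$, the inner sum is a convex combination of $\{\tau_{i,k}(X_i>z)\}_{k=0}^m$ (since $\sum_{k=0}^m \gamma_{ik}(z) = 1$), so it is minimized precisely when the mass is supported on $\cL_i(z)$---which is exactly condition \ref{prop:optimal_indemnities_1}.

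For the ``if'' direction, assume \ref{prop:optimal_indemnities_1} and \ref{prop:optimal_indemnities_2}. By \ref{prop:optimal_indemnities_2} the allocation is IR, and by the pointwise argument above it attains the global minimum of $T$ over all admissible indemnity structures in $\cI$. If a weakly dominating allocation $\allocationtilde$ existed in the sense of Definition \ref{POdef}, summing the $n+m$ inequalities would yield $\tilde T \leq T^*$ with at least one strict inequality, hence $\tilde T < T^*$, contradicting minimality.

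For the ``only if'' direction, \ref{prop:optimal_indemnities_2} is built into Definition \ref{POdef}. To prove \ref{prop:optimal_indemnities_1}, suppose by contradiction that the marginals $\{\gamma^*_{ij}\}$ of a PO allocation violate \ref{prop:optimal_indemnities_1} on a set of positive measure. Pick any rearrangement $\{\tilde\gamma_{ij}\}$ satisfying \ref{prop:optimal_indemnities_1} (measurable selection into $\cL_i(z)$); the induced $\tilde I_{ij}(x) := \int_0^x \tilde\gamma_{ij}(z)\,dz$ lie in $\cI_i$ because $\tilde\gamma_{ij}\in[0,1]$ and $\sum_j \tilde\gamma_{ij}\leq 1$, and $D := T^* - \tilde T > 0$. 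Let $A_i$ denote the decrease in insurer $i$'s retained-loss risk and $C_j$ the decrease in reinsurer $j$'s loss risk obtained by swapping $\{I^*_{ij}\}$ for $\{\tilde I_{ij}\}$ while holding premia fixed; an identical layer-integral computation gives the bookkeeping identity $\sum_i A_i + \sum_j C_j = D$. Define $\tilde\pi_{ij} := \pi^*_{ij} - \varepsilon_{ij}$, where the matrix $\{\varepsilon_{ij}\}$ is chosen with prescribed row sums $-A_i + D/(2n)$ and column sums $C_j - D/(2m)$; since both marginal totals equal $\sum_j C_j - D/2$, such a matrix exists explicitly (for instance, $\varepsilon_{ij} = (-A_i + D/(2n))/m + (C_j - D/(2m))/n - (\sum_j C_j - D/2)/(nm)$). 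Each insurer's risk then strictly decreases by $D/(2n)$ and each reinsurer's by $D/(2m)$, contradicting the PO assumption.

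The main obstacle I anticipate is the aggregate-risk computation: distinguishing the two hypotheses carefully and justifying that comonotone additivity can be applied across both the inner sums $\sum_j I_{ij}(X_i)$ (insurer side, where comonotonicity is automatic) and $\sum_i I_{ij}(X_i)$ (reinsurer side, where either linearity or the comonotonicity assumption on the $X_i$'s is needed), along with the passage from Choquet integrals to the layer representation through the Marginal Indemnification Function formula. By comparison, the premium-redistribution step is mechanical once the identity $\sum_i A_i + \sum_j C_j = D$ and the transportation-problem structure are in hand.
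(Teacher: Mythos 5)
Your proof is correct and follows the same overall architecture as the paper's: reduce Pareto optimality to minimizing the translation-invariant aggregate risk subject to IR premium transfers, then characterize the minimizers. The paper packages the first step as a standalone lemma (Lemma \ref{lem:po_minimize_risk}) and, for the second step, factors $\cI = \prod_i \cI_i$ to decompose the aggregate minimization into $n$ single-insurer problems and then simply cites \cite[Proposition 4.10]{zhu2023equilibria}. You instead carry out that second step from scratch: substitute the MIF representation, write the aggregate risk as the layer integral $\int_0^M \sum_i \sum_{k=0}^m \tau_{i,k}(X_i>z)\,\gamma_{ik}(z)\,dz$, and minimize pointwise over the simplex $\{\gamma_{ik}(z)\ge 0,\ \sum_k\gamma_{ik}(z)=1\}$ to recover the support condition $\cL_i(z)$. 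That makes your argument more self-contained and arguably more transparent about \emph{why} the support must lie in $\cL_i(z)$, at the cost of a little more bookkeeping (the measurable selection into $\cL_i(z)$, and verifying that the pointwise minimizer reassembles into an admissible element of $\cI$ --- which you do note).

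On the premium-redistribution side, the paper and you set up the same transportation problem (row sums pinned by the insurers, column sums by the reinsurers, with a global consistency identity). The paper argues existence via a rank/count argument on the linear system; you give an explicit closed-form matrix $\varepsilon_{ij}$ --- a nice touch, and your allocation of $D/(2n)$ per insurer and $D/(2m)$ per reinsurer yields a strict improvement for every agent, whereas the paper leaves the insurers indifferent and gives $\varepsilon/m$ to each reinsurer; either suffices to contradict PO. The only thing I'd flag is that you should make the justification of $D>0$ fully explicit: the pointwise integrand $\sum_k \tau_{i,k}(X_i>z)\gamma^*_{ik}(z) - \min_k \tau_{i,k}(X_i>z)$ is nonnegative everywhere and strictly positive precisely where condition \ref{prop:optimal_indemnities_1} fails (after interpreting the condition through $\gamma_{i0}^*$), so violation on a set of positive Lebesgue measure in $z$ for some $i$ gives $D>0$. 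That is implicit in your write-up but worth stating.
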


\medskip

\begin{theorem}
\label{thm:induce_IRPO}
    Suppose that either $\tau_j$ is a probability measure for all $j\in\cM$, or that $X_1,\ldots,X_n$ are comonotone.
    Let $\mathfrak{I}^*_i\in\aleph$ and $(\bm{\nu}^*_{1},\ldots,\bm{\nu}^*_{m})\in\beth$.
    Then the strategy $\stratstar$ induces a PO allocation.
    That is, the allocation
        \[
            \left(\left\{\mathfrak{I}^*_{ij}(\nu^*_{i1},\ldots,\nu^*_{im})\right\}_{i\in \cN, j \in \cM},\,\,
                \left\{\Pi^{\nu^*_{ij}}\left(\mathfrak{I}^*_{ij}(\nu^*_{i1},\ldots,\nu^*_{im})(X_i)\right)\right\}_{i\in \cN, j \in \cM}\right)
        \]
    is Pareto optimal.
\end{theorem}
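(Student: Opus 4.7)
The plan is to invoke Proposition \ref{prop:optimal_indemnities}, which reduces Pareto optimality of the induced allocation to two checks: (i) that the marginal indemnifications $\gamma^*_{ij}$ associated with the strategy satisfy $\sum_{j\in\mathcal{L}_i(z)}\gamma^*_{ij}(z)=1$ and $\gamma^*_{ij}(z)=0$ for $j\in\mathcal{M}\setminus\mathcal{L}_i(z)$, and (ii) that the allocation $\bigl(\{\mathfrak{I}^*_{ij}(\nu^*_{i1},\ldots,\nu^*_{im})\},\{\Pi^{\nu^*_{ij}}(\cdot)\}\bigr)$ is individually rational. Each check exploits the structure built into $\beth$ together with the generous distribution property of $\aleph$, and either the additivity of $\rho^{Re}_j$ (Theorem \ref{ThSPNE}) or its comonotone additivity applied to the comonotone family $\{I^*_{ij}(X_i)\}_{i\in\cN}$ (Theorem \ref{ThSPNEcom}).

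For (i), I fix $i\in\cN$ and $z\in[0,M]$. The first clause of Definition \ref{defn:beth} gives $\underline{\nu}_i(X_i>z)=\bar{\tau}_i(X_i>z)$, so Proposition \ref{propcapacity} determines $\sum_{j=1}^m\gamma^*_{ij}(z)$ by comparing $\alpha_i(X_i>z)$ to $\bar{\tau}_i(X_i>z)$. I will split the argument into the three cases $\alpha_i>\bar{\tau}_i$, $\alpha_i=\bar{\tau}_i$, and $\alpha_i<\bar{\tau}_i$. In each case, the third clause of Definition \ref{defn:beth} furnishes a witness $j^*\in\mathcal{T}_i(z)\cap\mathcal{M}_i(z)$, and generous distribution (Definition \ref{Generous Distribution}) applied with $k'=j^*$ forces $\gamma^*_{ij}(z)=0$ whenever $\tau_{i,j}(X_i>z)>\min_k\tau_{i,k}(X_i>z)$ and $\gamma^*_{i0}(z)=0$ whenever $\alpha_i(X_i>z)>\min_k\tau_{i,k}(X_i>z)$. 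The delicate subcase is $\alpha_i(X_i>z)=\bar{\tau}_i(X_i>z)$: here Proposition \ref{propcapacity} only pins $\sum_{j=1}^m\gamma^*_{ij}(z)$ down to some $H_i(z)\in[0,1]$. If $0\in\mathcal{L}_i(z)$, any value of $H_i(z)$ is fine and the required sums automatically add up correctly. If $0\notin\mathcal{L}_i(z)$, I will apply generous distribution to the retention index $k=0$ against $j^*$, which has $\nu^*_{ij^*}=\bar{\tau}_i=\nu_{i0}=\alpha_i$ but $\tau_{i,j^*}<\alpha_i=\tau_{i,0}$, to force $\gamma^*_{i0}(z)=0$ and hence $H_i(z)=1$.

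For (ii), insurer IR is immediate, since the zero indemnity structure is feasible in the insurer's minimization problem \eqref{eq:minproblemSPNE} and yields objective value $\rho^{In}_i(X_i)$; hence the optimal value cannot exceed $\rho^{In}_i(X_i)$. For reinsurer IR, I will use either additivity (in the risk-neutral case) or comonotone additivity applied to the comonotone family $\{I^*_{ij}(X_i)\}_{i\in\cN}$ (in the comonotone case), together with the layer representation of the Choquet integral, to write the reinsurer's net risk level as
\[
    \sum_{i=1}^n\int_0^M\bigl(\tau_{i,j}(X_i>z)-\nu^*_{ij}(X_i>z)\bigr)\gamma^*_{ij}(z)\,dz.
\]
From step (i), on the support of $\gamma^*_{ij}$ we have $j\in\mathcal{T}_i(z)\cap\mathcal{M}_i(z)$, which gives $\tau_{i,j}(X_i>z)=\min_k\tau_{i,k}(X_i>z)\leq\bar{\tau}_i(X_i>z)=\nu^*_{ij}(X_i>z)$, so the integrand is pointwise non-positive and reinsurer IR follows.

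The main obstacle will be the case analysis in step (i), and particularly the subcase $\alpha_i(X_i>z)=\bar{\tau}_i(X_i>z)$. In that regime, Proposition \ref{propcapacity} leaves $H_i(z)$ genuinely free, and pinning down both the retention weight $\gamma^*_{i0}(z)$ and the distribution of $\gamma^*_{ij}(z)$ across $j\in\mathcal{M}$ requires applying generous distribution simultaneously against the retention index and across ties among reinsurers, using condition 3 of $\beth$ as the source of the required witness.
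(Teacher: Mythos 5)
Your proposal is correct. It follows the same high-level strategy as the paper — invoke Proposition \ref{prop:optimal_indemnities} and verify conditions \ref{prop:optimal_indemnities_1} and \ref{prop:optimal_indemnities_2} — but where the paper's proof delegates both checks to citations of \cite[Theorems 3.11, 4.4, and 4.12]{zhu2023equilibria} (applied to each single-insurer sub-market $i$), you unpack those black boxes and re-derive the verifications directly from Proposition \ref{propcapacity}, the three clauses of Definition \ref{defn:beth}, and the generous distribution property. Your case analysis is sound throughout: in all three regimes ($\alpha_i>\bar\tau_i$, $\alpha_i=\bar\tau_i$, $\alpha_i<\bar\tau_i$) the set $\mathcal{T}_i(z)$ is nonempty precisely when you need it to be, so clause 3 of $\beth$ does furnish the witness $j^*\in\mathcal{T}_i(z)\cap\mathcal{M}_i(z)$, and generous distribution with $k'=j^*$ correctly kills both $\gamma^*_{i0}$ (when $0\notin\mathcal{L}_i(z)$) and $\gamma^*_{ij}$ for $j\in\mathcal{M}_i(z)\setminus\mathcal{L}_i(z)$. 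For IR your argument is cleaner than the paper's: insurer IR follows trivially from feasibility of the zero indemnity in \eqref{eq:minproblemSPNE}, and for reinsurer IR you observe directly that on the support of $\gamma^*_{ij}$ one has $\tau_{i,j}\leq\bar\tau_i=\nu^*_{ij}$, so the per-layer surplus is pointwise non-positive, rather than routing through the explicit $\min$ formula of \cite[Proposition 3.9]{zhu2023equilibria} as the paper's proof of Theorem \ref{ThSPNE} does. The trade-off is simply self-containment versus brevity: the paper's citations keep the proof short at the cost of requiring the reader to import the single-insurer machinery, while your version makes visible exactly which structural features of $\beth$ and $\aleph$ are doing the work in each regime.
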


\begin{proof}
    By Proposition \ref{prop:optimal_indemnities}, verifying Pareto optimality in this market involves checking the two conditions \ref{prop:optimal_indemnities}\ref{prop:optimal_indemnities_1} and \ref{prop:optimal_indemnities}\ref{prop:optimal_indemnities_2}.
    Note first that by construction of the strategy $\stratstar$, it is easily verified that for each $i\in\cN$, we have that
        \[    \big(\nu_{i1}^*,\ldots,\nu_{im}^*,\mathfrak{I}^*_i\big)
        \]
    is a strategy for a market with one insurer and $m$ reinsurers that satisfies the conditions of \cite[Theorem 3.11]{zhu2023equilibria}.
    It then follows from \cite[Theorem 4.4]{zhu2023equilibria} that for each $i\in\cN$, we have
        \begin{align*}
            \rho^{In}_i\left(X_i-\sum_{j=1}^m\mathfrak{I}^*_{ij}(\nu^*_{i1},\ldots,\nu^*_{im})(X_i)+\sum_{j=1}^m\Pi^{\nu^*_{ij}}\left(\mathfrak{I}^*_{ij}(\nu^*_{i1},\ldots, \nu^*_{im})(X_i)\right)\right)&\le \rho^{In}_i(X_i)\,,\\
            \rho^{Re}_j\left( (1+\theta_j)\mathfrak{I}^*_{ij}(\nu^*_{i1},\ldots,\nu^*_{im})(X_i)-\Pi^{\nu^*_{ij}}\left(\mathfrak{I}^*_{ij}(\nu^*_{i1},\ldots, \nu^*_{im})(X_i)\right)\right)&\le0\quad\forall j\in\cM\,.
        \end{align*}
    
\noindent By summing the second inequality over all $i\in\cN$, we have that
        \[
      \rho^{Re}_j\left(\sum_{i=1}^n \left((1+\theta_j)\mathfrak{I}^*_{ij}(\nu^*_{i1},\ldots,\nu^*_{im})(X_i)-\Pi^{\nu^*_{ij}}\left(\mathfrak{I}^*_{ij}(\nu^*_{i1},\ldots, \nu^*_{im})(X_i)\right)\right)\right)\le0\quad\forall j\in\cM\,,
        \]
    implying individual rationality, so condition \ref{prop:optimal_indemnities}\ref{prop:optimal_indemnities_2} holds.
    
    It remains to verify condition \ref{prop:optimal_indemnities}\ref{prop:optimal_indemnities_1}.
    To this end, notice that this condition is verified by iterating over each $i\in\cN$.
    It then follows that a direct application of \cite[Theorem 4.12]{zhu2023equilibria} to each $i\in\cN$ yields the desired result.
\end{proof}

\bigskip
\section{Stackelberg Equilibrium vs.\ SPNEs -- A Numerical Example}
\label{sec:example}

In the following, we show that the introduction of additional reinsurers into the market has a positive effect on the welfare of the insurers.
When there is only one reinsurer in the market (i.e., $m=1$), we recover the setting of \cite{ghossoub2024stackelberg}, who show that none of the $n$ insurers are able to experience a welfare gain, because of monopoly.
In the following example, we will show that increasing the number of reinsurers can alleviate this issue by capturing value for the insurers.

\medskip
\subsection{The Stackelberg Model}

The market with one reinsurer with a first-mover advantage is the Stackelberg model, for which the solution concept is given by the following.
Note that since $m=1$ in this model, we drop the letter $j$ from the notation for simplicity.

\medskip

\begin{definition}
    A pair $\left(\{I^*_i\}_{i\in\cN},\{\nu^*_i\}_{i\in\cN}\right) \in \mathcal{I}\times\cC^n$ is said to be a \emph{Stackelberg equilibrium} (SE) if
        \begin{itemize}
            \item  For each $i\in\cN$, we have
                    $I^*_{i}
                    \in \underset{I_{i}\in\cI_i}{\arg\min}\,
                    \rho_i^{In}\big( X_i - I_{i}(X_i) + \Pi^{\nu_i^*}(I_{i}(X_i))\big)$\,, and,
            
            \item $ \rho^{Re}\big(\sum_{i=1}^n((1+\theta_1)I^*_{i}(X_i)-\Pi^{\nu_i^*}(I^*_{i}(X_i))\big) 
                    \leq\rho^{Re}\big(\sum_{i=1}^n((1+\theta_1)\tilde{I}_{i}(X_i)-\Pi^{\tilde{\nu}_i}(\tilde{I}_{i}(X_i))\big)$
                for all $\{\tilde{\nu}_{i}\}_{i=1}^n\in\cC^n$ and $\{\tilde{I}_{i}\}_{i\in\cN} \in\cI$ such that      
                    \[
                        \tilde{I}_{i}
                        \in\underset{I_i\in\cI_i}{\arg\min}\,\,\rho_i^{In}
                        \left(X_i-I_{i}(X_i)+\Pi^{\tilde{\nu}_i}(I_{i}(X_i)\right)
                        \quad\forall i\in\cN\,.
                    \]
        \end{itemize}
\end{definition}

It follows from this definition that Stackelberg equilibria are a special case of SPNEs.
Hence by \cite[Proposition 5.3]{ghossoub2024stackelberg}, if $(\{I^*_{i}\}_{i \in \cN},\{\nu^*_i\}_{i\in \cN} )$ is an SE, then for each $i\in\cN$, we have
    \[
        \rho_i^{In}\left(X_i-I^*_{i}(X_i)+\Pi^{\nu_i^*}(I^*_{i}(X_i)\right)=\rho_i^{In}(X_i)\,,
    \]

\noindent implying that each insurer $i$ is indifferent between reinsurance and no reinsurance.
This is further illustrated by the following numerical example. Consider a reinsurance market with $n=3$ insurers and $m=1$ reinsurer. 
We assume that for each $i\in\cN=\{1,2,3\}$, the risk $X_i$ is distributed according to a censored exponential distribution with respect to a given probability measure $\Pr$.
Each $X_i$ has parameter $\lambda=3$ and is censored at a value of $5$.
Hence, their survival functions are given by
    \[
        \mathbb{P}(X_i>z) = \begin{cases}
            e^{-\lambda z} \,\, & 0 \leq z \leq 5 \\
            0 \,\, & z>5
        \end{cases}\,.
    \]

\noindent The reinsurer is assumed to be risk neutral, with subjective beliefs represented by a probability measure $\tau$ and a loading factor of $\theta=0.15$.
We assume that the reinsurer judges each $X_i$ to be a censored exponential distribution, but with a different parameter value of $\beta=2.5$.
The survival function of the reinsurer's preference is therefore
    \[
        \tau(X_i>z)=\begin{cases}
            e^{-\beta z}\,\,& 0 \leq z \leq 5 \\
            0 \,\, & z>5
            \end{cases}\,,
    \]
for each $i\in\cN$.

We will assume that the preferences for each insurer are given by the Expected Shortfall.
We recall the following standard definitions:

\medskip

\begin{definition}
    The \emph{Value-at-Risk (VaR)} at level $\gamma\in(0,1)$ of a random variable $X\in\mathcal{X}$ under the probability measure $\Pr$ is
        \[
        \mathrm{VaR}_\gamma^\Pr(X)
                :=\inf\left\{t\in\R\,\,\middle|\,\,\Pr(X>t)\le\gamma\right\}\,.
        \]
\end{definition}

\medskip

\begin{definition}
    The \emph{Expected Shortfall (ES)} at level $\gamma\in(0,1)$ of a random variable $X\in\mathcal{X}$ under the probability measure $\Pr$ is
        \[
            \mathrm{ES}_\gamma^\Pr(Z)
                :=\frac{1}{\gamma}\int_0^\gamma\mathrm{VaR}_u^\Pr(X)\,du\,.
        \]
\end{definition}

We assume that risk measure of each insurer $i$ is an ES with respect to the parameter $\gamma_i$, with $\gamma_1=10\%$, $\gamma_2=5\%$, and $\gamma_3=1\%$.
It is well-known that the ES admits the following representation as a Choquet risk measure.
For all $Z\in\cX$,
    \[
        \rho_i^{In}(Z)=\mathrm{ES}_{\gamma_i}^\Pr(Z)=\int Z\,dT_i\circ\Pr=\int Z\,d\alpha_i\,,
    \]
where
    \[
        T_i(t):=\min\left\{t/\gamma_i, 1\right\}\quad\forall i\in\cN\,.
    \]

As a special case of Theorem \ref{ThSPNE}, we may solve for the indemnity structure $I_1^*,I_2^*,I_3^*$, as shown in Figure \ref{Fig5}.
The risk to each agent both before and after reinsurance is given in Table \ref{table:stackelberg}.
Notably, the initial risk and post-transfer risk are identical for each of the insurers, indicating a lack of welfare gain from reinsurance.

\begin{figure}[hbtp!]
    \centering
\includegraphics[width=0.5\linewidth]{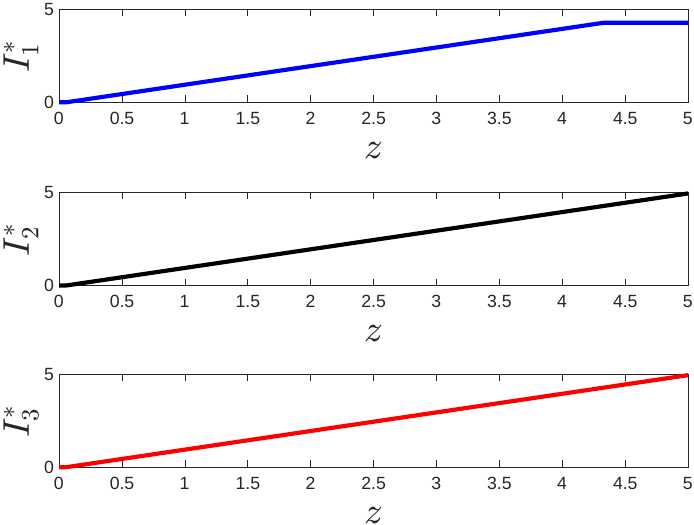}
    \caption{Optimal Indemnity Structure.}
    \label{Fig5}
\end{figure}
\begin{table}[hbtp!]
    \centering
    \begin{tabular}{|c|c|c|c|}
    \hline
         & \textbf{Initial Risk} & \textbf{Premium Paid} & \textbf{Post-Transfer Risk}\\
        \hline
        Insurer 1 & 1.100861  & 1.044949 & 1.100861 \\
        Insurer 2 & 1.331909  & 1.276004 & 1.331909 \\
        Insurer 3 & 1.868380  & 1.812475 & 1.868380 \\
        Reinsurer & 0 & -- & -2.933441 \\ 
        \hline
    \end{tabular}
    \caption{Results for the Stackelberg Model.}
    \label{table:stackelberg}
\end{table}

\medskip
\subsection{SPNE with Multiple Reinsurers}

We now examine the effect of the introduction of an additional reinsurer to the market described above.
That is, we take $m=2$, and we suppose that the second reinsurer has a loading factor of $\theta_2=0.10$, and believes that each risk has a parameter value of $\beta_2=2$.
The survival functions for each $X_i$ according to the reinsurers are therefore given by
    \[
        \tau_j(X_i>z)=\begin{cases}
            e^{-\beta_jz}\,\,& 0 \leq z \leq 5 \\
            0 \,\, & z>5
            \end{cases}\,,
    \]
with $\beta_1=2.5$ and $\beta_2=2$.

By Theorem \ref{ThSPNE}, the explicit characterization of SPNE depends on the relationship between the functions $\alpha_i$ and $\tau_{j}$. Figure \ref{Fig1} plots these functions, and Figure \ref{Fig2} plots the second-lowest true preferences $\bar{\tau}_i$.
The SPNE indemnity structure is given in Figure \ref{Fig3}.

\begin{figure}[hbtp!]
	\centering
    \subfloat[Survival Probabilities $\alpha_i$ and $\tau_{j}$.]{\includegraphics[width=0.5\textwidth]{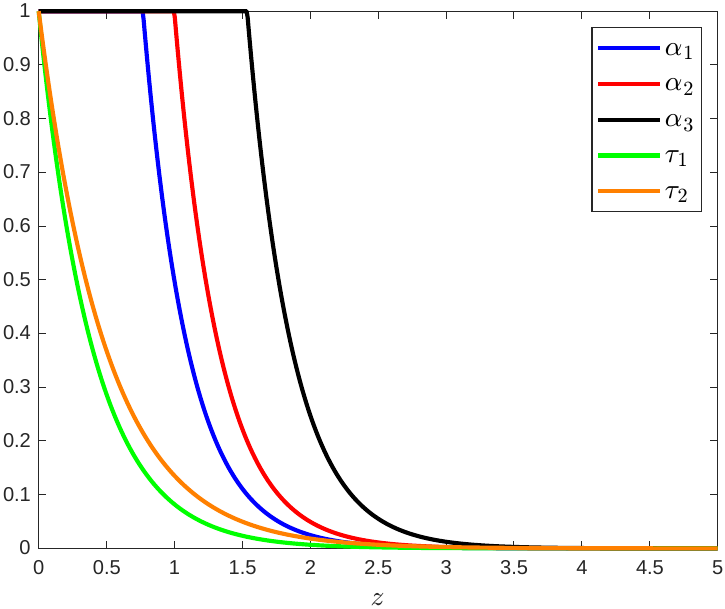}\label{Fig1}}
    \subfloat[Second-Lowest Preference $\bar{\tau}_i$.]{\includegraphics[width=0.5\textwidth]{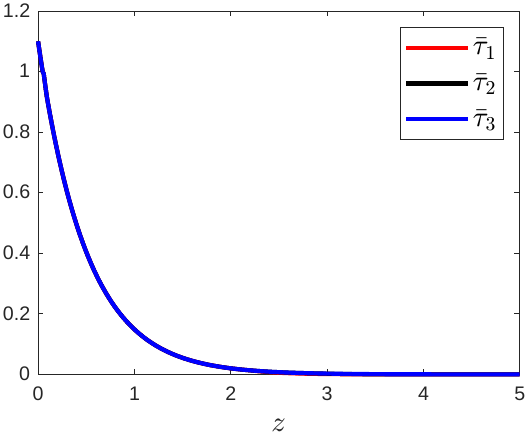}\label{Fig2}}
    \caption{\vspace{0.8cm}}
\end{figure}
\vspace{-1cm}
\begin{figure}[H]
    \centering
\includegraphics[width=0.5\linewidth]{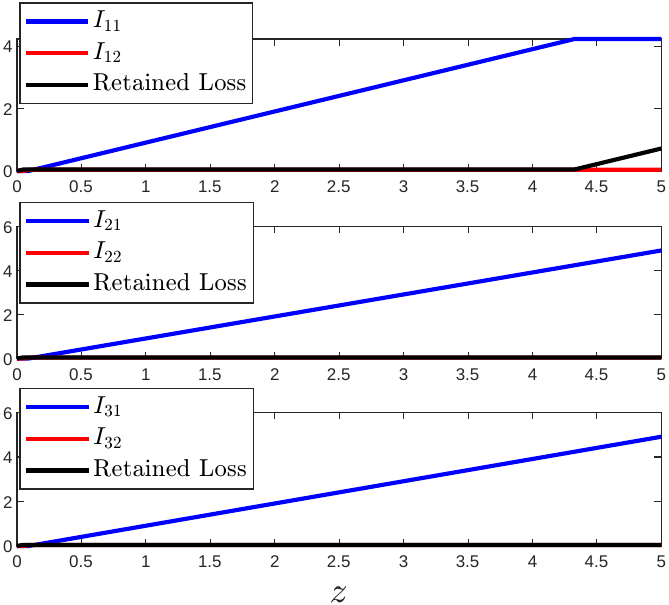}
    \caption{Optimal Indemnity Structure For Each Insurer}
    \label{Fig3}
\end{figure}

A numerical summary of the allocations resulting from the SPNE is given in Table \ref{Tab1}.
Note that in contrast to the Stackelberg case, every agent experiences a welfare gain in the case of $m=2$ reinsurers.
We interpret this as a positive consequence of the supply-side competition when a second reinsurer is introduced.
Indeed, the introduction of an additional reinsurer prevents any single reinsurer from charging extremely high premia, since doing so would invite the other supplier to undercut their price.

\begin{table}[H]
    \centering
    \begin{tabular}{|c|c|c|c|}
        \hline
        & & \textbf{One Reinsurer} & \textbf{Two Reinsurers} \\
        \hline
                  & Initial Risk       & 1.100861 & 1.100861 \\
        Insurer 1 & Post-Transfer Risk & 1.100861 & 0.545767 \\
                  & Welfare Gain       & 0        & 0.555094 \\
        \hline
                  & Initial Risk       & 1.331909 & 1.331909 \\
        Insurer 2 & Post-Transfer Risk & 1.331909 & 0.547430 \\
                  & Welfare Gain       & 0        & 0.784479 \\
        \hline
                  & Initial Risk       & 1.868380 & 1.868380 \\
        Insurer 3 & Post-Transfer Risk & 1.868380 & 0.547954 \\
                  & Welfare Gain       & 0        & 1.320426 \\
        \hline 
        \multirow{2}{*}{Reinsurer 1} & Post-Transfer Risk & -2.933441 & -0.273442 \\
                                     & Welfare Gain       &  2.933441 &  0.273442 \\
        \hline
        \multirow{2}{*}{Reinsurer 2} & Post-Transfer Risk & --        & -0.000993 \\
                                     & Welfare Gain       & --        &  0.000993 \\

        \hline
    \end{tabular}
    \caption{Results with Two Reinsurers}
    \label{Tab1}
\end{table}

\bigskip
\section{Conclusion}\label{sec:conclusion}

In this paper, we provide a unifying framework for the study of equilibria in large sequential-move reinsurance markets populated by several insurers on the demand side and several reinsurers on the supply side, where the reinsurers have the first-mover advantage. Each insurer has preferences represented by a general Choquet risk measure and can purchase coverage from any or all reinsurers. Each reinsurer has preferences represented by a general Choquet risk measure and can provide coverage to any or all insurers. Pricing in this market is done via a nonlinear pricing rule given by a Choquet integral, and we allow for belief heterogeneity.

Within this framework, we propose that the Subgame Perfect Nash Equilibrium (SPNE) is the most suitable equilibrium concept. We show that SPNEs can be determined in terms of backward induction, and we provide a characterization thereof in the case of risk-neutral reinsurers and in the case of comonotone losses. In both cases, we show that equilibrium contracts are individually rational and Pareto optimal. Furthermore, in a numerical example, we show that equilibrium contracts in our market induce a strict welfare gain for the insurers, as opposed to the monopoly market with multiple agents on the demand side examined by  \cite{ghossoub2024stackelberg}.


\newpage
\hypertarget{LinkToAppendix}{\ }
\appendix

\vspace{-0.8cm}


\section{Proofs of Main Results} \label{appendix-proofs}

\medskip
\subsection{Proof of Theorem \ref{ThSPNE}}

For $i \in \mathcal{N}$ and $j \in \cM$, let $\mathfrak{I}^*_i \in \aleph$ and $(\bm{\nu}_{1},\ldots,\bm{\nu}_m)\in\beth$.
To verify that this is strategy indeed an SPNE, we apply Lemma \ref{SPNEminprop}.
By the definition of $\aleph$, it follows immediately that $\mathfrak{I}^*_i(\nu^*_{i1},\ldots,\nu^*_{im})$ is a solution to \eqref{eq:minproblemSPNE}.
It remains to show that there does not exist a $\bm{\tilde{\nu}}\in\cC^n$ such that \eqref{eq:no_deviation} holds.
To this end, we will first show that the total post-transfer risk to each reinsurer $j$ is exactly
    \[
        \sum_{i=1}^n\int_0^M\min\{\tau_{i,j}(X_i>z)-\bar{\tau}_i(X_i>z),\,0\}\,dz\,.
    \]
We then argue that it is not possible for each reinsurer to improve their post-transfer risk beyond this value by changing their pricing strategy.

To this end, recall that we may write each indemnity function $\mathfrak{I}^*_{ij}(\nu^*_{i1},\ldots,\nu^*_{im})$ in terms of its marginal indemnification $\gamma_{ij}^*$, as in \eqref{eq:gamma_defn}.
    \begin{align*}
    \rho_j^{Re}&\left(\sum_{i=1}^n (1+\theta_j)\mathfrak{I}^*_{ij}(\nu^*_{i1}, \cdots, \nu^*_{im})(X_i) \right) -  \sum_{i=1}^n \Pi^{\nu^*_{ij}}\left( \mathfrak{I}^*_{ij}(\nu^*_{i1}, \cdots, \nu^*_{im})(X_i)\right)\\
        &=  \sum_{i=1}^n \E^{\tau_j}\bigg((1+\theta_j) \mathfrak{I}_{ij}^*(\nu^*_{i1}, \ldots, \nu^*_{im})(X_i)\bigg) -\sum_{i=1}^n \Pi^{\nu^*_{ij}}(\mathfrak{I}_{ij}^*(\nu^*_{i1}, \ldots, \nu^*_{im})(X_i)\big)\\
        &= \sum_{i=1}^n \bigg(\E^{\tau_j}\big((1+\theta_j) \mathfrak{I}_{ij}^*(\nu^*_{i1}, \ldots, \nu^*_{im})(X_i)\big) - \Pi^{\nu^*_{ij}}(\mathfrak{I}_{ij}^*(\nu^*_{i1}, \ldots, \nu^*_{im})(X_i)\big)\bigg)\\
        &= \sum_{i=1}^n \bigg( \int (1+\theta_j)\mathfrak{I}_{ij}^*(\nu^*_{i1}, \ldots, \nu^*_{im})(X_i)\,d\tau_j - \int_0^M \nu^*_{ij}(X_i>z)\gamma^*_{ij}(z)\,dz\bigg)\\
        &= \sum_{i=1}^n \bigg(\int_0^M(1+\theta_j)\tau_j(X_i>z) \gamma^*_{ij}(z)\,dz - \int_0^M \nu^*_{ij}(X_i>z)\gamma^*_{ij}(z)\,dz\bigg)\\
        &= \sum_{i=1}^n \bigg(
        \int_0^M\tau_{i,j}(X_i>z) \gamma^*_{ij}(z)\,dz - \int_0^M \nu^*_{ij}(X_i>z)\gamma^*_{ij}(z)\,dz\bigg)\\
        &=  \sum_{i=1}^n \int_0^M \big( \tau_{i,j}(X_i>z) - \nu^*_{ij}(X_i>z)\big) \gamma^*_{ij}(z)\,dz\,.
    \end{align*}

By repeatedly applying \cite[Proposition 3.9]{zhu2023equilibria} for each $i\in\cN$, it follows that
    \begin{equation}
        \label{eq:target}
        \sum_{i=1}^n \int_0^M \big( \tau_{i,j}(X_i>z) - \nu^*_{ij}(X_i>z)\big) \gamma^*_{ij}(z)\,dz
        =\sum_{i=1}^n\int_0^M\min\{\tau_{i,j}(X_i>z)-\bar{\tau}_i(X_i>z),\,0\}\,dz\,.
    \end{equation}
In other words, the improvement in the reinsurer's risk measure is precisely the sum over all contracts of the positive difference between their true preference $\tau_{i,j}$ and the second-lowest preference $\bar{\tau}_i$.
That is, the reinsurer profits exactly when their true preference $\tau_{i,j}$ is lower than $\bar{\tau}_i$.

Now for a fixed reinsurer $j$, consider an alternative strategy $\bm{\tilde{\nu}}\in\cC^n$.
If the reinsurer were to switch to this alternative strategy, the indemnity structure selected by each insurer $i$ would be $\mathfrak{I}^*_{ij}(\tilde{\nu}_i,\nu^*_{i, -j})$, which can be expressed in terms of the marginal indemnification as
    \[
        \mathfrak{I}^*_{ij}(\tilde{\nu}_i, \nu^*_{i, -j})(x) = \int_0^x \tilde{\gamma}_{ij}(z)\,dz\,.
    \]

Using a similar simplification as the above, we can show that
    \begin{align*}
    \rho_j^{Re}&\left(\sum_{i=1}^n (1+\theta_j) \mathfrak{I}^*_{ij}(\tilde{\nu}_i, \nu^*_{i,-j})(X_i)\right)
            -\sum_{i=1}^n \Pi^{\tilde{\nu}_i}\left(\mathfrak{I}^*_{ij}(\tilde{\nu}_i, \nu^*_{i,-j})(X_i) \right)\\
        &=\sum_{i=1}^n \int_0^M \big( \tau_{i,j}(X_i>z) -\tilde{\nu}_i(X_i>z)\big)\tilde{\gamma}_{ij}(z)\,dz\,.
    \end{align*}
Hence, by \eqref{eq:target}, it suffices to show that
    \begin{align*}
        \sum_{i=1}^n\int_0^M \big( \tau_{i,j}(X_i>z) - \tilde{\nu}_i(X_i>z)\big)\tilde{\gamma}_{ij}(z)\,dz
            &\ge\sum_{i=1}^n\int_0^M \big( \tau_{i,j}(X_i>z) - \nu^*_{ij}(X_i>z)\big) \gamma^*_{ij}(z)\,dz\\
        &=\sum_{i=1}^n\int_0^M\min\{\tau_{i,j}(X_i>z)-\bar{\tau}_i(X_i>z),\,0\}\,dz.
    \end{align*}

Fix $i\in\cN$, and consider the following cases for different tail risks $X_i>z$.
    \begin{enumerate}
        \item $\tau_{i,j}(X_i>z)-\bar{\tau}_i(X_i>z)<0$:
            In this case, the reinsurer is fully profiting from indemnifying this tail risk when using the original strategy $\nu_{ij}^*$.
            If the reinsurer were to switch their strategy to $\tilde{\nu}_i$, we have the following:
            \begin{itemize}
                \item $\tilde{\nu}_i(X_i>z)>\bar{\tau}_i(X_i>z)$: In this case the reinsurer has increased the price for this tail risk. 
                However, by point (2) of Definition \ref{defn:beth}, there is another reinsurer charging the price $\bar{\tau}_i(X_i>z)$, so the new price $\tilde{\nu}_i(X_i>z)$ is no longer the lowest price in the market.
                This implies that $\tilde{\gamma}_{ij}=0$, and so
                    \[
                        \big( \tau_{i,j}(X_i>z) - \tilde{\nu}_i(X_i>z)\big)\tilde{\gamma}_{ij}(z)=0\ge\min\{\tau_{i,j}(X_i>z)-\bar{\tau}_i(X_i>z),\,0\}\,.
                    \]
                \item $\tilde{\nu}_i(X_i>z)<\bar{\tau}_i(X_i>z)$: In this case the reinsurer has decreased the price for this tail risk.
                By point (1) of Definition \ref{defn:beth}, this price $\tilde{\nu}_i(X_i>z)$ is now the unique lowest price for this tail risk, which implies $\tilde{\gamma}_{ij}(z)=1$. Hence,
                    \begin{align*}
                        \big( \tau_{i,j}(X_i>z)-\tilde{\nu}_i(X_i>z)\big)\tilde{\gamma}_{ij}(z)&=\tau_{i,j}(X_i>z)-\tilde{\nu}_i(X_i>z)\\
                        &\ge\tau_{i,j}(X_i>z)-\bar{\tau}_i(X_i>z)\\
                        &=\min\{\tau_{i,j}(X_i>z)-\bar{\tau}_i(X_i>z),\,0\}\,.
                    \end{align*}
                \item $\tilde{\nu}_i(X_i>z)=\bar{\tau}_i(X_i>z)$: In this case the reinsurer has not changed the price for this tail risk. Here, we have
                    \begin{align*}
                        \big( \tau_{i,j}(X_i>z)-\tilde{\nu}_i(X_i>z)\big)\tilde{\gamma}_{ij}(z)
                            &=\big( \tau_{i,j}(X_i>z)-\bar{\tau}_i(X_i>z)\big)\tilde{\gamma}_{ij}(z)\\
                        &\ge\tau_{i,j}(X_i>z)-\bar{\tau}_i(X_i>z)\\
                        &=\min\{\tau_{i,j}(X_i>z)-\bar{\tau}_i(X_i>z),\,0\}\,.
                    \end{align*}
            \end{itemize}

        \item $\tau_{i,j}(X_i>z)-\bar{\tau}_i(X_i>z)\ge0$: In this case, the reinsurer is unable to profit from indemnifying this tail risk using the original strategy. In what follows, we show that it is not possible to profit on these tail risks regardless of the choice of pricing capacity.
            \begin{itemize}
                \item $\tilde{\nu}_i(X_i>z)>\bar{\tau}_i(X_i>z)$: Similar to case (1), increasing the price for this tail risk causes $\tilde{\gamma}_{ij}=0$, so
                    $
                    \big(\tau_{i,j}(X_i>z) - \tilde{\nu}_i(X_i>z)\big)\tilde{\gamma}_{ij}(z)=0=\min\{\tau_{i,j}(X_i>z)-\bar{\tau}_i(X_i>z),\,0\}\,.
                    $
                \item $\tilde{\nu}_i(X_i>z)\le\bar{\tau}_i(X_i>z)$: In this case the reinsurer has decreased the price for this tail risk, but since $\tilde{\nu}_i(X_i>z)\le\bar{\tau}_i(X_i>z)\le\tau_{i,j}(X_i>z)$, this price is now so low that a loss is incurred to reinsurer $j$. That is,
                    \[
                    \big(\tau_{i,j}(X_i>z) - \tilde{\nu}_i(X_i>z)\big)\tilde{\gamma}_{ij}(z)\ge 0=\min\{\tau_{i,j}(X_i>z)-\bar{\tau}_i(X_i>z),\,0\}\,.
                    \]
            \end{itemize}
    \end{enumerate}

    \noindent Hence, we have shown that for all $i\in\cN$, we have
        \[
            \big(\tau_{i,j}(X_i>z) - \tilde{\nu}_i(X_i>z)\big)\tilde{\gamma}_{ij}(z)\ge\min\{\tau_{i,j}(X_i>z)-\bar{\tau}_i(X_i>z),\,0\}\,.
        \]
    It then follows that
        \[
        \sum_{i=1}^n\int_0^M \big( \tau_{i,j}(X_i>z) - \tilde{\nu}_i(X_i>z)\big)\tilde{\gamma}_{ij}(z)\,dz
            \ge\sum_{i=1}^n\int_0^M\min\{\tau_{i,j}(X_i>z)-\bar{\tau}_i(X_i>z),\,0\}\,dz\\
        \]
    as desired, which completes the proof.\qed

\bigskip
\subsection{Proof of Proposition \ref{prop:optimal_indemnities}}

We prove this characterization result for Pareto-optimal allocations using two steps.
Firstly, since preferences are translation invariant, Pareto optimality of an allocation is equivalent to minimizing the aggregate post-transfer risk, as shown by Lemma \ref{lem:po_minimize_risk} below.
It then follows that those allocations that minimize aggregate post-transfer risk are precisely those of the form given in the statement of Proposition \ref{prop:optimal_indemnities}.
To this end, we first show the following:

\medskip

\begin{lemma}
    \label{lem:po_minimize_risk}
    An allocation $\allocationstar\in\cI\times\mathbb{R}^{n\times m}$ is PO if and only if the following hold:
        \begin{enumerate}[label=(\roman*)]
            \item \label{lem:po_minimize_risk_1}
                The indemnity structure $\{I^*_{ij}\}_{i\in\cN,j\in\cM}$ is a solution to
        \begin{equation}\label{infPO}
                        \inf_{\{I_{ij}\}_{i\in\cN,j\in\cM}\in\cI}
                            \left\{\sum_{j=1}^m \rho^{Re}_j\left(\sum_{i=1}^n(1+\theta_j)I_{ij}(X_i)\right)
                    +\sum_{i=1}^n \rho^{In}_i \left(X_i - \sum_{j=1}^m I_{ij}(X_i)\right)\right\}\,.
                    \end{equation}
            \item \label{lem:po_minimize_risk_2}
                The premia $\{\pi^*_{ij}\}_{i\in\cN,j\in\cM}$ are chosen such that $\allocationstar$ is IR.
        \end{enumerate}
\end{lemma}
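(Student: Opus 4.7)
The plan is to exploit the translation invariance of all Choquet risk measures to decouple the roles of indemnities and premia. More precisely, summing the post-transfer risks of all $n+m$ agents causes every premium $\pi_{ij}$ to appear once with a $+$ sign (from insurer $i$) and once with a $-$ sign (from reinsurer $j$), so the aggregate post-transfer risk is exactly the objective of \eqref{infPO}:
\[
\mathcal{A}(\{I_{ij}\}):=\sum_{i=1}^n\rho_i^{In}\!\left(X_i-\sum_{j=1}^m I_{ij}(X_i)\right)+\sum_{j=1}^m\rho_j^{Re}\!\left(\sum_{i=1}^n(1+\theta_j)I_{ij}(X_i)\right),
\]
independent of $\{\pi_{ij}\}$. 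This identity is the backbone of both directions.

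For the ``if'' direction, I would argue by contradiction: assume $\{I^*_{ij}\}$ solves \eqref{infPO}, the premia $\{\pi^*_{ij}\}$ make the allocation IR, but some allocation $\allocationtilde$ Pareto-dominates it. Summing the $n+m$ Pareto inequalities and using the cancellation identity gives $\mathcal{A}(\{\tilde I_{ij}\})\le \mathcal{A}(\{I^*_{ij}\})$, with strict inequality because at least one of the original inequalities is strict. This contradicts the optimality of $\{I^*_{ij}\}$ in \eqref{infPO}.

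For the ``only if'' direction, IR is immediate from Definition \ref{IRdef}. The work is in showing \ref{lem:po_minimize_risk_1}: suppose for contradiction that $\{I^*_{ij}\}$ does not solve \eqref{infPO}, so there exists $\{\tilde I_{ij}\}\in\mathcal{I}$ with $\Delta:=\mathcal{A}(\{I^*_{ij}\})-\mathcal{A}(\{\tilde I_{ij}\})>0$. I then need to redistribute this $\Delta$ across all $n+m$ agents by choosing new premia $\{\tilde\pi_{ij}\}$ that make every agent strictly better off, which would contradict Pareto optimality of $\allocationstar$. Concretely, set $\delta:=\Delta/(n+m)$ and choose $\{\tilde\pi_{ij}\}$ so that for each $i$, $\sum_j\tilde\pi_{ij}=\sum_j\pi^*_{ij}-a_i-\delta$, and for each $j$, $\sum_i\tilde\pi_{ij}=\sum_i\pi^*_{ij}+b_j+\delta$, where $a_i$ and $b_j$ denote the changes in insurer $i$'s and reinsurer $j$'s Choquet risk of their indemnity portions when passing from $I^*$ to $\tilde I$.

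The main obstacle is verifying that such premia $\{\tilde\pi_{ij}\}$ actually exist: the row sums and column sums of the matrix $\{\tilde\pi_{ij}\}$ are both specified, and compatibility requires the two grand totals to agree. A short calculation gives the grand total condition $-\sum_i a_i-n\delta=\sum_j b_j+m\delta$, which is equivalent to $\Delta=(n+m)\delta$ and hence holds by the choice of $\delta$. With compatible marginals, a valid matrix (for instance, a rank-one correction of $\{\pi^*_{ij}\}$ such as $\tilde\pi_{ij}=\pi^*_{ij}+r_i/m+c_j/n-T/(nm)$ with $r_i,c_j,T$ set from the required row and column sums) always exists. Substituting these premia and using translation invariance shows that every insurer's post-transfer risk decreases by exactly $\delta$, and every reinsurer's post-transfer risk also decreases by $\delta$. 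This contradicts the Pareto optimality of $\allocationstar$ and completes the proof.
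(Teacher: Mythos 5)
Your proposal is correct and follows essentially the same route as the paper: both directions rest on the translation-invariance cancellation identity for the aggregate post-transfer risk, the ``if'' direction sums the Pareto inequalities to contradict optimality in \eqref{infPO}, and the ``only if'' direction constructs a side-payment matrix with prescribed row and column sums whose existence follows from the compatibility of the grand totals. The only cosmetic difference is how the surplus is redistributed: the paper makes each insurer exactly indifferent and hands the entire gain $\varepsilon$ to the reinsurers ($\varepsilon/m$ each), whereas you split $\Delta$ uniformly across all $n+m$ agents; both deliver a valid Pareto improvement, so this does not change the argument.
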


\begin{proof}
    Let $\allocationstar$ be an allocation such that \ref{lem:po_minimize_risk_1} and \ref{lem:po_minimize_risk_2} hold, and assume for the sake of contradiction that this allocation is not PO.
    Then there exists another allocation $\allocationtilde$ such that
        \begin{align*}
            \rho^{In}_i\left(X_i-\sum_{j=1}^m\tilde{I}_{ij}(X_i)+\sum_{j=1}^m\tilde{\pi}_{ij}\right)
                &\le\rho^{In}_i \left(X_i - \sum_{j=1}^m I^*_{ij}(X_i) +\sum_{j=1}^m\pi^*_{ij}\right)\quad\forall i\in\cN\,,\\
            \rho^{Re}_j\left(\sum_{i=1}^n\big((1+\theta_j)\tilde{I}_{ij}(X_i)-\tilde{\pi}_{ij}\big)\right)
                &\le\rho^{Re}_j\left(\sum_{i=1}^n\big((1+\theta_j) I^*_{ij}(X_i)-\pi^*_{ij}\big)\right)\quad\forall j\in\cM\,,
        \end{align*}
    with at least one strict inequality for some $i\in\cN$ or $j\in\cM$.
    Summing these inequalities gives
        \begin{align*}
            \sum_{j=1}^m\rho^{Re}_j&\left(\sum_{i=1}^n\big((1+\theta_j)\tilde{I}_{ij}(X_i)-\tilde{\pi}_{ij}\big)\right)
                +\sum_{i=1}^n\rho^{In}_i\left(X_i-\sum_{j=1}^m\tilde{I}_{ij}(X_i)+\sum_{j=1}^m\tilde{\pi}_{ij}\right)\\
            &<\sum_{j=1}^m \rho^{Re}_j\left(\sum_{i=1}^n \big((1+\theta_j)I^*_{ij}(X_i) - \pi^*_{ij}\big)\right) + \sum_{i=1}^n \rho^{In}_i \left(X_i - \sum_{j=1}^mI^*_{ij}(X_i) +\sum_{j=1}^m\pi^*_{ij}\right)\,.
        \end{align*}
    
    \noindent Since each $\rho_i^{In}$ and $\rho_j^{Re}$ is translation invariant, the premia $\tilde{\pi}_{ij}$ and $\pi_{ij}^*$ all cancel out, which yields
        \begin{align*}
            \sum_{j=1}^m\rho^{Re}_j&\left(\sum_{i=1}^n(1+\theta_j)\tilde{I}_{ij}(X_i)\right)
                +\sum_{i=1}^n\rho^{In}_i\left(X_i-\sum_{j=1}^m\tilde{I}_{ij}(X_i)\right)\\
            &<\sum_{j=1}^m\rho^{Re}_j\left(\sum_{i=1}^n(1+\theta_j)I^*_{ij}(X_i)\right)+\sum_{i=1}^n\rho^{In}_i\left(X_i-\sum_{j=1}^mI^*_{ij}(X_i)\right)\,,
        \end{align*}
    contradicting \ref{lem:po_minimize_risk_1}. Hence, $\allocationstar$ must be PO.

    Conversely, suppose that $\allocationstar$ is PO.
    Since PO allocations are IR by definition, \ref{lem:po_minimize_risk_2} holds.
    Assume for the sake of contradiction that \ref{lem:po_minimize_risk_1} does not hold, so there exists an indemnity structure $\{\tilde{I}_{ij}\}_{i\in\cN,j\in\cM}\in\cI$ such that
        \begin{align*}
            \sum_{j=1}^m\rho^{Re}_j&\left(\sum_{i=1}^n(1+\theta_j)\tilde{I}_{ij}(X_i)\right)
                +\sum_{i=1}^n\rho^{In}_i\left(X_i-\sum_{j=1}^m\tilde{I}_{ij}(X_i)\right)\\
            &<\sum_{j=1}^m\rho^{Re}_j\left(\sum_{i=1}^n(1+\theta_j)I^*_{ij}(X_i)\right)+\sum_{i=1}^n\rho^{In}_i\left(X_i-\sum_{j=1}^mI^*_{ij}(X_i)\right)\,.
        \end{align*}
    
\noindent Let $\varepsilon>0$ denote the positive difference in the above inequality.
    That is,
        \begin{align*}
            \varepsilon&:=\sum_{j=1}^m\rho^{Re}_j\left(\sum_{i=1}^n(1+\theta_j)I^*_{ij}(X_i)\right)
                +\sum_{i=1}^n\rho^{In}_i\left(X_i-\sum_{j=1}^mI^*_{ij}(X_i)\right)\\
            &\quad\quad-\sum_{j=1}^m\rho^{Re}_j\left(\sum_{i=1}^n(1+\theta_j)\tilde{I}_{ij}(X_i)\right)
                -\sum_{i=1}^n\rho^{In}_i\left(X_i-\sum_{j=1}^m\tilde{I}_{ij}(X_i)\right)\,.
        \end{align*}
    Define
        \begin{align*}
            a_i&:=\sum_{j=1}^m\pi_{ij}^*
                +\rho^{In}_i\left(X_i-\sum_{j=1}^mI_{ij}^*(X_i)\right)
                -\rho^{In}_i\left(X_i-\sum_{j=1}^m\tilde{I}_{ij}(X_i)\right)
                \quad\forall i\in\cN\,,\\
            b_j&:=\sum_{i=1}^n\pi_{ij}^*
                +\rho^{Re}_j\left(\sum_{i=1}^n(1+\theta_j)\tilde{I}_{ij}(X_i)\right)
                -\rho^{Re}_j\left(\sum_{i=1}^n(1+\theta_j)I_{ij}^*(X_i)\right)
                +\frac{\varepsilon}{m}
                \quad\forall j\in\cM\,.
        \end{align*}
    
    \noindent It is straightforward to verify that
        \begin{align*}
            \sum_{i=1}^na_i&=
                \sum_{i=1}^n\sum_{j=1}^m\pi_{ij}^*
                +\sum_{i=1}^n\rho^{In}_i\left(X_i-\sum_{j=1}^mI_{ij}^*(X_i)\right)
                -\sum_{i=1}^n\rho^{In}_i\left(X_i-\sum_{j=1}^m\tilde{I}_{ij}(X_i)\right)\\
            &=\sum_{j=1}^m\sum_{i=1}^n\pi_{ij}^*
                +\sum_{j=1}^m\rho^{Re}_j\left(\sum_{i=1}^n(1+\theta_j)\tilde{I}_{ij}(X_i)\right)
                -\sum_{j=1}^m\rho^{Re}_j\left(\sum_{i=1}^n(1+\theta_j)I^*_{ij}(X_i)\right)
                +\varepsilon\\
            &=\sum_{j=1}^mb_j\,.
        \end{align*}
    Now consider the following system of linear equations with variables $\{\pi_{ij}\}_{i\in\cN,j\in\cM}$:
        \begin{align*}
            \sum_{j=1}^m\pi_{ij}&=a_i\quad\forall i\in\cN\,,\\
            \sum_{i=1}^n\pi_{ij}&=b_j\quad\forall j\in\cM\,.
        \end{align*}
    This is a system of $n+m-1$ linearly independent equations and $n\times m$ variables.
    Since $\sum_{i=1}^na_i=\sum_{j=1}^mb_j$ and there at least as many variables as equations, there exists a solution $\{\tilde{\pi}_{ij}\}_{i\in\cN,j\in\cM}$ to this linear system.
    We now show that the allocation $\allocationtilde$ is a Pareto improvement over $\allocationstar$, providing the desired contradiction.
    Firstly, for each $i\in\cN$, we have
        \begin{align*}
            \rho^{In}_i&\left(X_i-\sum_{j=1}^m\tilde{I}_{ij}(X_i)+\sum_{j=1}^m\tilde{\pi}_{ij}\right)\\
                &=\rho^{In}_i\left(X_i-\sum_{j=1}^m\tilde{I}_{ij}(X_i)\right)
                    +a_i\\
                &=\rho^{In}_i\left(X_i-\sum_{j=1}^m\tilde{I}_{ij}(X_i)\right)
                    +\sum_{j=1}^m\pi_{ij}^*
                    +\rho^{In}_i\left(X_i-\sum_{j=1}^mI_{ij}^*(X_i)\right)
                    -\rho^{In}_i\left(X_i-\sum_{j=1}^m\tilde{I}_{ij}(X_i)\right)\\
                &=\rho^{In}_i\left(X_i-\sum_{j=1}^mI_{ij}^*(X_i)+\sum_{j=1}^m\pi_{ij}^*\right)\,.
        \end{align*}
    Moreover, for each $j\in\cM$, we have
        \begin{align*}
            \rho^{Re}_j\left(\sum_{i=1}^n\big((1+\theta_j)\tilde{I}_{ij}(X_i)-\tilde{\pi}_{ij}\big)\right)
            &=\rho^{Re}_j\left(\sum_{i=1}^n\big((1+\theta_j)\tilde{I}_{ij}(X_i)\big)\right)
                    -b_j\\
                &=\rho^{Re}_j\left(\sum_{i=1}^n\big((1+\theta_j)\tilde{I}_{ij}(X_i)\big)\right)
                    -\sum_{i=1}^n\pi_{ij}^*
                    -\rho^{Re}_j\left(\sum_{i=1}^n(1+\theta_j)\tilde{I}_{ij}(X_i)\right)\\
                &\quad\qquad+\rho^{Re}_j\left(\sum_{i=1}^n(1+\theta_j)I_{ij}^*(X_i)\right)
                    -\frac{\varepsilon}{m}\\
                &<\rho^{Re}_j\left(\sum_{i=1}^n(1+\theta_j)I_{ij}^*(X_i)\right)
                    -\sum_{i=1}^n\pi_{ij}^*\\
                &=\rho^{Re}_j\left(\sum_{i=1}^n\big((1+\theta_j) I^*_{ij}(X_i)-\pi^*_{ij}\big)\right)\,,
        \end{align*}
    
    \noindent where the inequality is strict since $\varepsilon>0$.
    It then follows that $\allocationstar$ is not Pareto optimal, a contradiction.
    Hence, if $\allocationstar$ is PO, then both \ref{lem:po_minimize_risk_1} and \ref{lem:po_minimize_risk_2} must hold.
\end{proof}

To complete the proof of Proposition \ref{prop:optimal_indemnities}, it suffices to show that an indemnity structure is a solution to \eqref{infPO} if and only if it satisfies condition \ref{prop:optimal_indemnities_1} of Proposition \ref{prop:optimal_indemnities}.
To this end, note that since either $\rho_j^{Re}$ is an expectation or the risks $X_1,\ldots,X_n$ are comonotone, we have that for each $j\in\cM$,
    \[
       \rho^{Re}_j\left(\sum_{i=1}^n(1+\theta_j)I_{ij}(X_i)\right)=\sum_{i=1}^n\rho^{Re}_j\left((1+\theta_j)I_{ij}(X_i)\right)\,.
    \]
Hence, the problem \eqref{infPO} is equivalent to
    \begin{equation}\label{infPO_simplified}
        \inf_{\{I_{ij}\}_{i\in\cN,j\in\cM}\in\cI}\left\{
           \sum_{i=1}^n\sum_{j=1}^m\rho^{Re}_j\left((1+\theta_j)I_{ij}(X_i)\right)
            -\sum_{i=1}^n\sum_{j=1}^m\rho^{In}_i \left( I_{ij}(X_i)\right)\right\}\,,
    \end{equation}
where the constant terms $\rho_i^{In}(X_i)$ have been omitted.
For each $i\in\cN$, define the set
    \[
        A_i:=\left\{\sum_{j=1}^m\rho^{Re}_j\left((1+\theta_j)I_{ij}(X_i)\right)
            -\sum_{j=1}^m\rho^{In}_i\left(I_{ij}(X_i)\right)
            \,\,\middle|\,\,
            \{I_{ij}\}_{j\in\cM}\in\cI_i\right\}\,.
    \]
Since $\cI=\prod_{i=1}^n\cI_i$, the optimization problem \eqref{infPO_simplified} simplifies to
    \begin{align*}
        \inf\left(\sum_{i=1}^nA_i\right)=\sum_{i=1}^n\inf A_i\,,
    \end{align*}
since the infimum commutes with the Minkowski sum.
The above decomposes the problem \eqref{infPO_simplified} into $n$ different minimization problems.
That is, $\{I_{ij}^*\}_{i\in\cN,j\in\cM}$ solves \eqref{infPO_simplified} if and only if for each $i\in\cN$, the indemnity structure $\{I_{ij}^*\}_{j\in\cM}\in\cI_i$ solves
    \begin{equation}\label{infPO_simplified_2}
        \inf_{\{I_{ij}\}_{j\in\cM}\in\cI_i}\left\{
           \sum_{j=1}^m\rho^{Re}_j\left((1+\theta_j)I_{ij}(X_i)\right)
            -\sum_{j=1}^m\rho^{In}_i \left( I_{ij}(X_i)\right)\right\}\,.
    \end{equation}
Note that for each insurer $i\in\cN$, problem \eqref{infPO_simplified_2} is precisely the problem of minimizing the aggregate risk of only those contracts involving insurer $i$.
Hence, problem \eqref{infPO_simplified_2} is related to the characterization of Pareto optima in a market with only one insurer.
The result of Proposition \ref{prop:optimal_indemnities} now follows through an application of \cite[Proposition 4.10]{zhu2023equilibria} to each insurer $i\in\cN$.\qed

\bigskip
\section{Details for Assumption \ref{as:sum_1-lipschitz}}
\label{apdx:sum_1-lipschitz}

Note that restricting the feasible collection of indemnities to the set $\mathcal{I}$ ensures that $\sum_{j=1}^m I_{ij}(X_i)$ is also monotone and 1-Lipschitz.
A justification for this assumption is that when an insurer $i$ purchases coverage from multiple reinsurers, the same risk cannot be covered twice.
After a portion of their risk is covered by one reinsurer, only the retained risk is eligible for further coverage -- hence, reinsurance is purchased incrementally.
Suppose that insurer $i$ purchases coverage of an amount $I_{i1}(X_i)$ from reinsurer $1$.
Their retained risk after this transaction is $X_i-I_{i1}(X_i)$.
If insurance is then purchased from reinsurer $2$, then this contract must be of the form $I_{i2}(X_i-I_{i1}(X_i))$.
The following result formalizes this process and shows that it is possible to represent these contracts using the set of admissible indemnity structures $\cI$.

\medskip

\begin{proposition}
    Fix $i\in\cN$, and suppose that for each $j\in\cM$, the functions $I_{ij}$ are non-decreasing and $1$-Lipschitz.
    For each $j\in\cM$, let $R_j$ be the retained risk after purchasing the contracts $I_{i1},\ldots,I_{ij}$.
    That is, define recursively
        \begin{align*}
            R_0&:=X_i\,,\\
            R_j&:=R_{j-1}-I_{ij}(R_{j-1})\,.
        \end{align*}
    Then there exist functions $\{\tilde{I}_{ij}\}_{j\in\cM}\in\cI_i$ such that
        \[
            \tilde{I}_{ij}(X_i)=I_{ij}(R_{j-1}), \ \ \forall \, j\in\cM.
        \]
\end{proposition}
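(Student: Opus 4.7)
The plan is to define $\tilde I_{ij}$ explicitly as a composition involving a \emph{deterministic} retention map, then verify membership in $\cI_i$ by checking (i) the no-sabotage conditions on each $\tilde I_{ij}$ and (ii) the aggregate no-sabotage condition via a telescoping identity.

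First I would introduce, for each $j\in\cM\cup\{0\}$, a function $\phi_j\colon[0,M]\to[0,M]$ defined recursively by $\phi_0(x):=x$ and $\phi_j(x):=\phi_{j-1}(x)-I_{ij}(\phi_{j-1}(x))$, so that $R_j=\phi_j(X_i)$ pathwise. An easy induction on $j$ shows that each $\phi_j$ is non-decreasing, $1$-Lipschitz, satisfies $\phi_j(0)=0$, and obeys $0\le\phi_j(x)\le x$; the inductive step uses that for $y\ge z$, $\phi_j(y)-\phi_j(z)=(\phi_{j-1}(y)-\phi_{j-1}(z))-(I_{ij}(\phi_{j-1}(y))-I_{ij}(\phi_{j-1}(z)))$, which lies in $[0,\phi_{j-1}(y)-\phi_{j-1}(z)]\subseteq[0,y-z]$ by the $1$-Lipschitz property of $I_{ij}$ and $\phi_{j-1}$.

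Next I would set $\tilde I_{ij}(x):=I_{ij}(\phi_{j-1}(x))$, so $\tilde I_{ij}(X_i)=I_{ij}(R_{j-1})$ by construction. Since $\phi_{j-1}$ takes values in $[0,M]$ and $I_{ij}\in\cI_0$, the composition $\tilde I_{ij}$ is $[0,M]$-valued, non-decreasing, $1$-Lipschitz, and satisfies $\tilde I_{ij}(0)=0$; hence $\tilde I_{ij}\in\cI_0$ for every $j\in\cM$.

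Finally, I would verify the aggregate condition by observing the telescoping identity
\[
  \sum_{j=1}^m \tilde I_{ij}(x) \;=\; \sum_{j=1}^m\bigl(\phi_{j-1}(x)-\phi_j(x)\bigr) \;=\; \phi_0(x)-\phi_m(x) \;=\; x-\phi_m(x).
\]
Because $\phi_m$ is non-decreasing, $1$-Lipschitz, with $\phi_m(0)=0$ and $0\le\phi_m(x)\le x\le M$, the map $x\mapsto x-\phi_m(x)$ inherits these properties and therefore lies in $\cI_0$, giving $\{\tilde I_{ij}\}_{j\in\cM}\in\cI_i$ as required. The only real subtlety is keeping the bookkeeping between the random-variable picture ($R_j$) and the deterministic-map picture ($\phi_j$) consistent; once that is fixed, every step reduces to the stability of the class $\cI_0$ under composition and telescoping.
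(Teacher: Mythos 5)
Your proof is correct and follows essentially the same route as the paper: both define $\tilde I_{ij}$ as $I_{ij}$ composed with a deterministic retention map and then verify membership in $\cI_0$ by exploiting stability of that class under monotone $1$-Lipschitz composition. The one genuine (if modest) difference is in the final step: the paper proceeds by induction on $m$ and then bounds the difference quotient of $f(x-g(x))+g(x)$ directly, whereas you make the telescoping identity $\sum_{j=1}^m \tilde I_{ij}(x) = x - \phi_m(x)$ explicit and read the aggregate $1$-Lipschitz/monotonicity property off the (already established) properties of $\phi_m$. Your version is a touch cleaner because it localizes all the inductive work in the single claim about $\phi_j$, after which both the individual and aggregate conditions fall out without further computation.
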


\begin{proof}
    We proceed by induction on $m$. The result clearly holds for $m=1$. Suppose for the sake of induction that the result holds for $m$. 
    Then it is straightforward to verify that, for all $j\in\cM$,
        \[
            R_j=X_i-\sum_{k=1}^j\tilde{I}_{ik}(X_i).
        \]
    Define the function $\tilde{I}_{i,m+1}:[0,M]\to[0,M]$ by $\tilde{I}_{i,m+1}(x) =  I_{i,m+1}\left(x-\sum_{j=1}^m\tilde{I}_{ij}(x)\right)$. It then follows that
        \begin{align*}
            \tilde{I}_{i,m+1}(X_i)&=I_{i,m+1}\left(X_i-\sum_{j=1}^m\tilde{I}_{ij}(X_i)\right)
            =I_{i,m+1}\left(R_m\right)\,.
        \end{align*}
    It remains to show that both $\tilde{I}_{i,m+1}$ and $\sum_{j=1}^{m+1}\tilde{I}_{ij}$ are non-decreasing and 1-Lipschitz.
    First, since $\sum_{j=1}^m\tilde{I}_{ij}$ is non-decreasing and 1-Lipschitz by the inductive assumption, then so is the function
        \[
            x\mapsto x-\sum_{j=1}^m\tilde{I}_{ij}(x)\,.
        \]
    Therefore $\tilde{I}_{i,m+1}$ is also non-decreasing and 1-Lipschitz as a composition of non-decreasing 1-Lipschitz functions, which implies that $\sum_{j=1}^{m+1}\tilde{I}_{ij}$ is non-decreasing.
    To show that $\sum_{j=1}^{m+1}\tilde{I}_{ij}$ is 1-Lipschitz, note that we can write
        \begin{align*}
            \sum_{j=1}^{m+1}\tilde{I}_{ij}(x)
                &=\tilde{I}_{i,m+1}(x)+\sum_{j=1}^m\tilde{I}_{ij}(x)
            =I_{i,m+1}\left(x-\sum_{j=1}^m\tilde{I}_{ij}(x)\right)+\sum_{j=1}^m\tilde{I}_{ij}(x)
            =f(x-g(x))+g(x)\,,
        \end{align*}
    
    \noindent where
        \[
            f:=I_{i,m+1}\quad\mbox{and}\quad g:=\sum_{j=1}^m\tilde{I}_{ij}\,.
        \]
    
    \noindent Note that both $f$ and $g$ are non-decreasing 1-Lipschitz functions.
    For $x>y$, we have
        \begin{align*}
            \frac{f(x-g(x))+g(x)-\big(f(y-g(y))+g(y)\big)}{x-y}
            &=\frac{f(x-g(x))-f(y-g(y)}{x-y}+\frac{g(x)-g(y)}{x-y}\\
            &\le\frac{x-g(x)-(y-g(y))}{x-y}+\frac{g(x)-g(y)}{x-y}
            =1\,,
        \end{align*}
    which shows that $\sum_{j=1}^{m+1}\tilde{I}_{ij}$ is 1-Lipschitz.
\end{proof}

For the sake of notation, it is simpler to consider the contracts $\tilde{I}_{ij}$, since these are functions of $X_i$ rather than functions of retained risk $R_{j-1}$.
We therefore impose Assumption \ref{as:sum_1-lipschitz} in this paper to represent the above process of incremental reinsurance.

\vspace{0.8cm}

\bibliographystyle{ecta}

\vspace{0.4cm}

\end{document}